\newtheorem{definition}{Definition}
\newtheorem{theorem}{Theorem}
\newtheorem{lemma}{Lemma}
\newtheorem{corollary}{Corollary}
\theoremstyle{remark}
\newtheorem{rmk}{Remark}
\newcommand{\nc}{\newcommand}
\nc{\enc}[1]{\ket{\overline{#1}}}
\nc{\ketbra}[2]{\vert#1\rangle\!\langle#2\vert}
\nc{\kb}[1]{\vert#1\rangle\!\langle#1\vert}
\nc{\braket}[2]{\langle#1\vert#2\rangle}
\nc{\tr}{\operatorname{tr}}
\nc{\cnot}{\operatorname{CNOT}}
\nc{\Init}{\operatorname{Init}}
\nc{\Eval}{\operatorname{Eval}}
\nc{\leftbrace}[2]{#1\left\{\vphantom{\begin{matrix} #2 \end{matrix}}\right.}
\nc{\rightbrace}[2]{\left.\vphantom{\begin{matrix} #2 \end{matrix}}\right\}#1}
\nc{\flr}{\ensuremath{\mathcal{F}_{\text{LR}}}}
\nc{\shor}{\ensuremath{\ket{\text{even}_7}}}
\newcommand\appendix@section[1]{%
  \refstepcounter{section}%
  \orig@section*{Appendix \@Alph\c@section: #1}%
  \addcontentsline{toc}{section}{Appendix \@Alph\c@section: #1}%
}
\let\orig@section\section
\g@addto@macro\appendix{\let\section\appendix@section}
\tikzstyle{int} = [draw,minimum height=10em,minimum width=4em]
\tikzstyle{party} = [draw=none, minimum height=2em,minimum width=2em]
\tikzstyle{arr} = [->,minimum size=2em]
\newcommand{\ket}[1]{{\left\vert{#1}\right\rangle}}
\newcommand{\qw}[1][-1]{\ar @{-} [0,#1]}
\newcommand{\qwx}[1][-1]{\ar @{-} [#1,0]}
\newcommand{\cw}[1][-1]{\ar @{=} [0,#1]}
\newcommand{\cwx}[1][-1]{\ar @{=} [#1,0]}
\newcommand{\gate}[1]{*+<.6em>{#1} \POS ="i","i"+UR;"i"+UL **\dir{-};"i"+DL **\dir{-};"i"+DR **\dir{-};"i"+UR **\dir{-},"i" \qw}
\newcommand{\meter}{*=<1.8em,1.4em>{\xy ="j","j"-<.778em,.322em>;{"j"+<.778em,-.322em> \ellipse ur,_{}},"j"-<0em,.4em>;p+<.5em,.9em> **\dir{-},"j"+<2.2em,2.2em>*{},"j"-<2.2em,2.2em>*{} \endxy} \POS ="i","i"+UR;"i"+UL **\dir{-};"i"+DL **\dir{-};"i"+DR **\dir{-};"i"+UR **\dir{-},"i" \qw}
\newcommand{\measure}[1]{*+[F-:<.9em>]{#1} \qw}
\newcommand{\control}{*!<0em,.025em>-=-<.2em>{\bullet}}
\newcommand{\ctrl}[1]{\control \qwx[#1] \qw}
\newcommand{\targ}{*+<.02em,.02em>{\xy ="i","i"-<.39em,0em>;"i"+<.39em,0em> **\dir{-}, "i"-<0em,.39em>;"i"+<0em,.39em> **\dir{-},"i"*\xycircle<.4em>{} \endxy} \qw}
\newcommand{\push}[1]{*{#1}}
\newcommand{\gategroup}[6]{\POS"#1,#2"."#3,#2"."#1,#4"."#3,#4"!C*+<#5>\frm{#6}}
\newcommand{\rstick}[1]{*!L!<-.5em,0em>=<0em>{#1}}
\newcommand{\lstick}[1]{*!R!<.5em,0em>=<0em>{#1}}
\newcommand{\ustick}[1]{*!D!<0em,-.5em>=<0em>{#1}}
\newcommand{\dstick}[1]{*!U!<0em,.5em>=<0em>{#1}}
\newcommand{\Qcircuit}{\xymatrix @*=<0em>}
\title{Classical leakage resilience from fault-tolerant quantum computation}
\author[1,2]{Felipe G. Lacerda\thanks{gomes@phys.ethz.ch}}
\author[1]{Joseph M. Renes\thanks{renes@phys.ethz.ch}}
\author[1]{Renato Renner\thanks{renner@phys.ethz.ch}}
\affil[1]{Institute for Theoretical Physics, ETH Zurich, Switzerland}
\affil[2]{Departamento de Ciência da Computação, Universidade de Brasília,
  Brazil}
\date{\vspace{-5ex}}
\begin{document}
\maketitle

\begin{abstract}
  Physical implementations of cryptographic algorithms leak information, which
  makes them vulnerable to so-called side-channel attacks. The problem of secure
  computation in the presence of leakage is generally known as leakage
  resilience. In this work, we establish a connection between leakage resilience
  and fault-tolerant quantum computation. We first prove that for a general
  leakage model, there exists a corresponding noise model in which fault
  tolerance implies leakage resilience. Then
  we show how to use constructions for fault-tolerant quantum computation to 
  implement classical circuits that are secure in specific leakage models. 
  %implement the components of an arbitrary classical circuit such that they are
  %secure in the independent leakage model. , by constructing quantum components
  %that, when used with classical inputs, can be run on a classical circuit.
\end{abstract}

\section{Introduction}
\label{sec:intro}

Modern theoretical cryptography is primarily concerned with developing schemes
that are \textit{provably secure} under reasonable assumptions. While the field
has been hugely successful, the threat model considered usually doesn't allow
for the possibility of \textit{side-channel attacks}---attacks on the physical
implementation of the cryptographic scheme.

Side-channel attacks have been a worry long before the advent of modern
cryptography. As early as 1943, it was discovered that a teletype used for
encryption by the American military caused spikes in an oscilloscope that could
then be used to recover the plaintext~\cite{nsatempest}. More recently,
side-channel attacks on cryptographic applications widely used in practice have
been revealed. One of these is the ``Lucky Thirteen'' attack on TLS in CBC
mode~\cite{alfardan2013lucky}, which is based on measuring the time it takes the
server to reply to a request over the network. Another recent
attack~\cite{genkin2013rsa} uses acoustic cryptanalysis to attack the GnuPG
implementation of RSA. The authors managed to extract the full RSA key by
measuring the noise produced by the computer while it decrypts a set of chosen
ciphertexts. Although the relevant software has since then been updated so that
these attacks are no longer possible, they highlight the importance of designing
implementations with side-channel attacks in mind.

The theoretical approach to side-channel attacks is to design \textit{protocols}
that are resilient against them. This is the focus of the area known as
\textit{leakage resilience}. In this work, we present a way to perform universal
leakage-resilient computation, i.e.\ we construct a general
``leakage-resilient compiler'' that takes an arbitrary circuit and produces a new, leakage-resilient 
version having the same computational functionality.

We take a novel approach to leakage resilience: fault-tolerant
quantum computation.  The basic idea is that all actions performed in the
execution of a classical circuit---as well as leakage attacks on it---can be
described using the formalism of quantum mechanics. Then, leakage of the
physical state of the computation is equivalent to a so-called phase error in a
quantum circuit.  Since a fault-tolerant quantum computation must protect
against phase errors (as well as more conventional bit-flip errors), it is
necessarily leakage-resilient.  However, achieving leakage resilience in this
way would require a fault-tolerant quantum circuit.  Here, we give a method for
constructing a leakage-resilient \emph{classical} circuit by modifying an
appropriate fault-tolerant \emph{quantum} circuit. The former mimics the latter,
inheriting its leakage resilience.  Our approach is similar to security proofs
of quantum key distribution in which it is shown that the operation of the
actual protocol---where the outcomes of measurements on entangled quantum states
are processed classically---can be interpreted as mimicking a fully quantum
protocol for entanglement distillation, from which the protocol derives its
security~\cite{shor2000simple}.

Before introducing our general setting for leakage, it is illustrative to
present a concrete example of leakage resilience in practice. Smart cards are
integrated circuits that have been widely used for authentication and also allow
the storing of sensitive data, making them portable carriers of information such
as money and medical records. Since smart cards are designed to be portable,
they are subject to a variety of physical attacks. Possible attacks include
measuring the time and the electrical current used when performing operations
(power analysis). Thus, leakage resilience is essential to the design of smart
cards.

More specifically, a smart card stores some internal data, provides an interface
for external input, performs some computation on the internal data and the
external input and sends the output through an output interface. One of the
design goals is that if the smart card is given to an adversary, who can send
inputs to and read off outputs from it, the adversary should not be able to
obtain any information about the internal data in the card, beyond what can be
learned from the regular output of the computation. This design goal is what we
mean by leakage resilience. However, whereas smart cards are usually designed to
be resilient against specific attacks, we are interested in larger classes of
attacks. Additionally, smart cards are devices that perform one fixed function,
while our goal is to have leakage resilience for general functions.

Our general setting for leakage is as follows: first, an honest party (Alice)
inputs a circuit $\mathcal{C}$ that computes a function $f$ along with a secret
$y$. Then, the adversary (Eve) is given black-box access to the circuit, so that
she can interactively send inputs, denoted $x$, to the circuit and gets the
corresponding outputs ($f(x,y)$). She also receives a description of the
circuit. Additionally, with each interaction Eve obtains leaked bits according
to the leakage model. Roughly, the goal is that Eve should not learn any more
information about $y$ than what she would get from $f(x,y)$ alone. This setting
will be formalized in Section~\ref{sec:def} using the abstract cryptography
framework~\cite{maurer2011abstract,maurer2012constructive}.

\subsection{Previous work}

The question of hiding internal computation from an eavesdropper is related to
the problem of \textit{program obfuscation}. Obfuscation can be seen as a
``worst-case leakage resilience'', in which the internal state must be protected
even if the whole execution leaks to the adversary.\footnote{One explicit
  connection between obfuscation and a specific leakage model (the ``only
  computation leaks information'' model of Micali and
  Reyzin~\cite{micali2004physically}) has been made
  in~\cite{bitansky2011program}; see also~\cite{goldwasser2012compute}.}
However, it is known that obfuscating programs is impossible in
general~\cite{barak2001possibility}. Thus, if we want any leakage resilience at
all, the type of leakage allowed has to be restricted in some way.

This is not necessarily a problem because in practice the adversary has its own
limitations when trying to obtain information from the system. Here we list a
few results in the field, although the list is by no means
comprehensive. Ishai~\textit{et al.}~\cite{ishai2003private} considered
adversaries that can learn the values of a bounded number of wires in the
circuit. Micali and Reyzin~\cite{micali2004physically} introduced the ``only
computation leaks information'' assumption, in which the leakage at each step of
the computation only depends on data that was used in the
computation. Faust~\textit{et al.}~\cite{faust2010protecting} consider one model
where the adversary gets a function of all of the circuit's state (that is, the
output and all intermediate computations), the restriction being that the
leakage function must be computable in $\textrm{AC}^0$, the family of circuits
containing only AND, NOT and OR gates (with unbounded fan-in) and having
constant depth and polynomial size. They also consider noisy leakage, where the
whole state of the circuit leaks, but the adversary only receives a noisy
version of the leaked state, each bit being flipped with some probability. A
common further assumption is to use a small component of
trusted
hardware~\cite{goldwasser2010securing,faust2010protecting,dziembowski2012leakage}.

Despite these advances, relating leakage models to actual leakage seen in
practice has so far proved to be challenging. Standaert \textit{et
  al.}~\cite{standaertleakage} present a few problems with the ``bounded
leakage'' assumption, in which the leakage at each computation step is assumed
to be bounded; see also~\cite{standaert2010leakage}.

Our work is loosely related to the one of Smith \textit{et~al.}~\cite{crepeau2002secure}, where techniques from fault-tolerant quantum
computation are used to develop a construction for secure multi-party quantum
computation. Given that multi-party computation techniques are commonly used for
leakage resilience (in particular, secret
sharing~\cite{ishai2003private,dziembowski2012leakage,goldwasser2010securing}),
the connection between fault tolerance and leakage resilience is perhaps unsurprising. 
However, the construction of Smith~\textit{et~al.\@} was inherently
quantum, whereas our construction runs on a classical machine.

\subsection{Our contribution}
We establish a relation between leakage-resilient (classical) computation and
fault-tolerant quantum computation, which are formally defined in
Section~\ref{sec:def}.  Specifically, we show how methods of the latter
can be used to construct leakage-resilient compilers, which transform a given
circuit into another (classical) circuit with the same computational
functionality as well as resistance to leakage.

The starting point in relating leakage-resilient classical computation and
fault-tolerant quantum computation is the observation that any classical logical
operation can be regarded as a quantum operation performing the same action in
the so-called computational basis. (Appendix~\ref{sec:app} provides a brief
background on the formalism and tools of quantum information theory necessary
here.) Then, as we show in Section~\ref{sec:phase-noise}, any given leakage
model may be interpreted as specific model of \emph{phase noise} afflicting the
corresponding quantum circuit.

Phase noise is not the most general type of quantum noise. Nonetheless, as we
show in Theorem~\ref{thm:ft-lr}, if fault tolerance is possible for a given
noise model---meaning roughly that error correction is performed frequently
enough that the encoded information essentially never suffers from errors---the
quantum computation is resilient to leakage of the corresponding leakage model.
However, we want to make classical circuits leakage-resilient and do not
necessarily want to carry out a quantum computation to achieve this goal.
Fortunately, the structure of certain quantum error-correcting codes is such
that we can mimic the error-correcting steps with classical circuits.

Following this approach, we construct a general leakage-resilient compiler by
mimicking the basic fault tolerant components of the fault-tolerant scheme
in~\cite{aliferis2005quantum} using classical components.  Section~\ref{sec:red}
describes how to transform into classical circuits certain simple types of
quantum circuits that serve as building blocks for arbitrary
circuits. Section~\ref{sec:lrg} then presents a fault-tolerant implementation of
the Toffoli gate, a gate that is universal for classical computation. Combining
this with the results of Section~\ref{sec:red} then gives a leakage-resilient
compiler.

As in other works, our construction assumes the existence of a part of the
circuit that is leak-free. The assumption is however minimal: the only leak-free
component we use is a source of (uniformly) random bits. Using the construction,
one can transform an arbitrary classical circuit into a circuit that is
resilient to leakage arising in any model for which reliable quantum computation
is possible under the corresponding phase noise model. One particular leakage
model that translates into a well-studied quantum noise model is that of
independent leakage, in which the value of each wire of the circuit leaks with
some fixed probability. While potentially too restrictive (in particular, the
independence assumption implies that the adversary does not choose the wires
that leak), its simplicity allows for an easy interpretation in the quantum
scenario as independent phase errors, for which various fault-tolerant
constructions are known. This model is used in our construction in
Section~\ref{sec:lrg}. Leakage models that include correlations lead to
error models that have yet to be analyzed.

We stress that rather than presenting a specific leakage resilient scheme, the
contribution of this work aims to provide a novel approach to
leakage resilience and connect this field of cryptography to the research area
of fault-tolerant quantum computation. Our work thus shows how results achieved
in one area (e.g., new threshold theorems for quantum fault tolerance) can be
translated to the other (e.g., bounds on the performance of leakage resilient
compilers). Our hope is that our result will inspire research developing the
relationship between fault tolerance and leakage resilience. In
Section~\ref{sec:thr} we discuss possible future directions for this line of
work.

\section{Definitions}
\label{sec:def}

\subsection{Abstract cryptography framework}
\label{sec:abstr-crypt}

In order to define leakage resilience we use the \textit{abstract cryptography}
framework~\cite{maurer2011abstract,maurer2012constructive}, of which we give a
short summary. From an abstract viewpoint, constructing a protocol amounts to
assuming that a certain set of \textit{real resources} is available and then
using them to build a new resource, termed an \textit{ideal resource}. By way of
composition, the ideal resource can then again be used as a real resource in
another protocol to build a more complex ideal resource.

For instance, the one-time pad construction assumes that a resource giving out a
secret key is available along with an authentic channel. It is then shown that
these resources can emulate a secure channel. In this case, the ideal resource
is the secure channel, whereas the authentic channel along with a shared secret
key is used as a real resource. On the other hand, in a protocol for
authentication, the authentic channel takes the role of the ideal resource, and
the real resource is a completely insecure channel together with a secret key.

In this framework, resources are a type of \textit{systems}, which are defined
as abstract objects which can be composed. Each system has an interface set
$\mathcal{I}$, and interfaces can be connected in order to form new
systems. Resources are systems where each interface corresponds to one party
that has access to it. As an example of a resource, we can define a private
channel between honest parties $A$ and $B$ subject to possible eavesdropping by
$E$ as a resource that takes inputs from $A$ and outputs them at $B$. Since the
resource gives $E$ no outputs, the channel is private by definition. Note that
this holds whether $E$ acts honestly or dishonestly; in abstract cryptography
the goal is to emulate the behavior of ideal resources in all situations, not
just when assuming certain parties are honest and others dishonest.

To emulate an ideal resource from given resources, the latter may be
composed. Furthermore, each party can act on their interface using a
\emph{converter}, which is also modelled as a system. A converter has an
``inside'' interface, connected to the resource, and an ``outside'' interface,
which is used by the parties. A \emph{protocol} specifies a converter for each
 party acting honestly and is applied to a real resource. In the security arguments, we
will also consider converters applied to the ideal resource by parties acting dishonestly. 
These are termed \emph{simulators}. We denote composition of systems by
juxtaposition, so that, for instance, the resource formed by plugging converter
$\pi_A$ into the resource $\mathcal{R}$ is denoted by $\pi_A \mathcal{R}$.

In order to allow for constructions that do not perfectly match a desired ideal
resource but only approximate it, we need a notion of distance $d$ between two
resources, which must be a pseudo-metric $d$ on the set of resources. Typically
we consider the distance to be the maximal advantage that a system trying to
distinguish between the two resources (the \textit{distinguisher}) can
have. Given two resources $\mathcal{R}$ and $\mathcal{S}$ and a distance $d$, we
also write $\mathcal{R} \approx_\varepsilon \mathcal{S}$ to denote
$d(\mathcal{R},\mathcal{S}) \leq \varepsilon$.

In this work we only ever need to consider two parties, $A$ and $E$, where $A$
is assumed to be honest. To keep the formal treatment simple, we
restrict ourselves to this case. In this scenario, what it means for a protocol
to securely emulate an ideal resource $\mathcal{S}$ given a resource
$\mathcal{R}$ takes a particularly simple form. It reduces to two different
conditions, one where $E$ is dishonest and another where $E$ is honest---that is,
it follows the protocol. We use the following definition of security, which, for
our scenario, is sufficient to imply the definition presented
in~\cite{maurer2011abstract}.\footnote{We note, however, that in the general case
  one needs to use additional constructions, termed \textit{filters}. We refer
  to~\cite{maurer2012constructive,maurer2011abstract} for more details.}

\begin{definition}
  \label{defn:sec}
  Let $\mathcal{R}$ and $\mathcal{S}$ be resources with interface set
  $\mathcal{I} = \{A,E\}$. We say that a protocol $\pi = \{\pi_A,\pi_E\}$
  \textnormal{securely constructs $\mathcal{S}$ from $\mathcal{R}$ within
    $\varepsilon$} if there exists a simulator $\sigma_E$ such that
  \begin{align}
    \label{eq:sec1}
    \pi_A \mathcal{R} &\approx_\varepsilon \sigma_E \mathcal{S} \quad\text{and}\\
    \label{eq:sec2}
    \pi_A \pi_E \mathcal{R} &\approx_\varepsilon \mathcal{S}.
  \end{align}
\end{definition}

\subsection{Leakage resilience}
\label{sec:leakage-resilience}

Using the abstract cryptography framework, we can now define leakage resilience
by specifying the corresponding ideal and real resource. They are two-party
resources, with parties that we denote by Alice ($A$) and Eve ($E$), where Alice
is assumed to be honest.

\noindent \textbf{Ideal resource.}  Our goal is to be able to compute a function
that takes an input to be given by the adversary, and an additional input that
corresponds to the initial secret. First we describe intuitively the kind of
resource that we want. At the beginning, Alice inputs the circuit to be executed
along with the secret input. Then, a description of the circuit is given to Eve,
who can execute it as a black box freely.

In light of this informal description, we define the ideal resource
$\mathcal{S}$ as follows. Alice initially inputs a secret $y$ and a description
of a circuit $\mathcal{C}$ that evaluates a function $f$. Eve can send inputs
$x$, to which she receives outputs $f(x,y)$. The ideal functionality also
outputs $\mathcal{C}$ to Eve. The resource is shown in
Figure~\ref{fig:lr-ideal}.

What this definition implies is that information about $y$ can only leak through
$f(x,y)$ and $\mathcal{C}$. As a concrete example, let $\mathcal{C}$ be a
circuit implementing an encryption algorithm $f$ that encrypts inputs $x$ using
the secret key $y$. In establishing security for a cryptographic scheme, one
assumes that the secret key is completely hidden from Eve. In the
example, this assumption is ensured by the way the ideal resource is defined,
since the only way information about $y$ could leak to Eve is through
the ciphertext $f(x,y)$.

This example also illustrates that security and leakage resilience are separate
goals: we place no restrictions on $f$, so even if the function reveals some
information about $y$ that would be of no consequence for leakage
resilience. Instead, leakage resilience only ensures that no additional
information about $y$ leaks.

\vspace{.5em}

\noindent \textbf{Real resource.} Informally, the real resource (which we denote
by $\mathcal{R}$) is a ``leaky'' version of the ideal resource, in which
additional information becomes available to Eve. Alice's interface allows her to
input a secret $y$ and a circuit $\mathcal{C}$ that evaluates a function
$f$. Eve's interface allows her to send inputs $x$, to which she gets outputs
$\mathcal{C}(x,y)$, and additionally send \textit{leakage requests} $l$, getting
leakages $l'$. As in the ideal resource, Eve also gets $\mathcal{C}$. The idea
is that Eve can use $\mathcal{R}$ as a black box, but also obtain additional
information from the leakage which might reveal something about the secret. This
scenario is represented in Figure~\ref{fig:lr-real}.

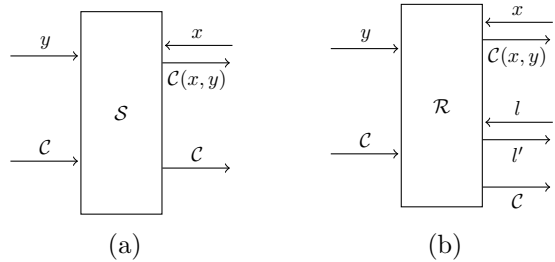
\begin{figure}[h!]
  \centering
  \begin{subfigure}[b]{0.25\textwidth}
    \begin{tikzpicture}[node distance=2.5cm,auto,shorten >=1pt,scale=0.7, transform shape]
  \node [party] (a) {};
  \node [int, right of=a] (c) {$\mathcal{S}$};
  \node [party, right of=c] (e) {};

  \coordinate (aue) at ($(a.mid east) + (0,1)$);
  \coordinate (ale) at ($(a.mid east) - (0,1)$);
  \path[->] (aue) edge node {$y$} (aue-|c.west);
  \path[->] (ale) edge node {$\mathcal{C}$} (ale-|c.west);

  \coordinate (euw) at ($(e.mid west) + (0,1.2)$);
  \coordinate (cue) at ($(c.mid east) + (0,1)$);
  \coordinate (cle) at ($(c.mid east) + (0,-1)$);
  \path[->,above] (euw) edge node {$x$} (euw-|c.east);
  \path[->,below] (cue) edge node {$\mathcal{C}(x,y)$} (cue-|e.west);
  \path[->,above] (cle) edge node {$\mathcal{C}$} (cle-|e.west);
\end{tikzpicture}
%%% Local Variables: 
%%% mode: latex
%%% TeX-master: "ft"
%%% End: 
    \caption{}
    \label{fig:lr-ideal}
    \end{subfigure}
  \begin{subfigure}[b]{0.25\textwidth}
    \begin{tikzpicture}[node distance=2.5cm,auto,shorten >=1pt, scale=0.7, transform
  shape]
  \node [party] (a) {};
  \node [int, right of=a] (c) {$\mathcal{R}$};
  \node [party, right of=c] (e) {};
  \coordinate (aue) at ($(a.mid east) + (0,1)$);
  \coordinate (ale) at ($(a.mid east) + (0,-1)$);
  \path[->] (aue) edge node {$y$} (aue-|c.west);
  \path[->] (ale) edge node {$\mathcal{C}$} (ale-|c.west);

  \coordinate (euw) at ($(e.mid west) + (0,1.5)$);
  \coordinate (cue) at ($(c.mid east) + (0,1.3)$);
  \coordinate (emuw) at ($(e.mid west) + (0,-.4)$);
  \coordinate (cmlw) at ($(c.mid east) + (0,-.6)$);
  \coordinate (cle) at ($(c.mid east) + (0,-1.5)$);
  \path[->,above] (euw) edge node {$x$} (euw-|c.east);
  \path[->,below] (cue) edge node {$\mathcal{C}(x,y)$} (cue-|e.west);
  \path[->,above] (emuw) edge node {$l$} (emuw-|c.east);
  \path[->,below] (cmlw) edge node {$l'$} (cmlw-|e.west);
  \path[->,below] (cle) edge node {$\mathcal{C}$} (cle-|e.west);
\end{tikzpicture}
%%% Local Variables: 
%%% mode: latex
%%% TeX-master: "ft"
%%% End: 
    \caption{}
    \label{fig:lr-real}
\end{subfigure}
\caption{(\subref{fig:lr-ideal}) ideal resource for leakage resilience and
  (\subref{fig:lr-real}) real resource. In both cases, Alice has access to the
  left interface and Eve has access to the right interface.}
\end{figure}

In order to satisfy the conditions in Definition~\ref{defn:sec}, we need a
protocol $\pi$ to construct the ideal resource $\mathcal{S}$ from the real
resource $\mathcal{R}$. The first condition in Definition~\ref{defn:sec} is
depicted in Figure~\ref{fig:lr-sim}.

\begin{figure}[h!]
  \centering
  \scalebox{.9}{\begin{tikzpicture}[node distance=2.8cm,auto,shorten
  >=1pt,baseline={([yshift=-.5ex]current bounding
    box.center)},scale=0.8,transform shape]
  \node [party] (a) {};
  \node [int, right of=a] (p) {$\pi_A$};
  \node [int, right of=p] (c) {$\mathcal{R}$};
  \node [party, right of=c] (e) {};
  \coordinate (aue) at ($(a.mid east) + (0,1)$);
  \coordinate (ale) at ($(a.mid east) + (0,-1)$);
  \path[->] (aue) edge node {$y$} (aue-|p.west);
  \path[->] (ale) edge node {$\mathcal{C}$} (ale-|p.west);

  \coordinate (pue) at ($(p.mid east) + (0,1)$);
  \coordinate (ple) at ($(p.mid east) + (0,-1)$);
  \path[->] (pue) edge node {$\overline{y}$} (pue-|c.west);
  \path[->] (ple) edge node {$\overline{\mathcal{C}}$} (ple-|c.west);

  \coordinate (euw) at ($(e.mid west) + (0,1.5)$);
  \coordinate (cue) at ($(c.mid east) + (0,1.3)$);
  \coordinate (elw) at ($(e.mid west) + (0,-.4)$);
  \coordinate (cle) at ($(c.mid east) + (0,-.6)$);
  \coordinate (clle) at ($(c.mid east) + (0,-1.5)$);
  \path[->,above] (euw) edge node {$x$} (euw-|c.east);
  \path[->,below] (cue) edge node {$\overline{\mathcal{C}(x,y)}$} (cue-|e.west);
  \path[->,above] (elw) edge node {$l$} (elw-|c.east);
  \path[->,below] (cle) edge node {$l'$} (cle-|e.west);
  \path[->,below] (clle) edge node {$\overline{\mathcal{C}}$} (clle-|e.west);
\end{tikzpicture}
%%% Local Variables: 
%%% mode: latex
%%% TeX-master: "ft"
%%% End:  {\large $\approx_{\varepsilon}$}\begin{tikzpicture}[node distance=2.8cm,auto,shorten
  >=1pt,baseline={([yshift=-.5ex]current bounding
    box.center)},scale=0.8,transform shape]
  \node [party] (a) {};
  \node [int, right of=a] (c) {$\mathcal{S}$};
  \node [int, right of=c] (s) {$\sigma_E$};
  \node [party, right of=s] (e) {};

  \coordinate (aue) at ($(a.mid east) + (0,1)$);
  \coordinate (ale) at ($(a.mid east) - (0,1)$);
  \path[->] (aue) edge node {$y$} (aue-|c.west);
  \path[->] (ale) edge node {$\mathcal{C}$} (ale-|c.west);

  \coordinate (euw) at ($(s.mid west) + (0,1.2)$);
  \coordinate (cue) at ($(c.mid east) + (0,1)$);
  \coordinate (cle) at ($(c.mid east) + (0,-1)$);
  \path[->,above] (euw) edge node {$x$} (euw-|c.east);
  \path[->,below] (cue) edge node {$\mathcal{C}(x,y)$} (cue-|s.west);
  \path[->,above] (cle) edge node {$\mathcal{C}$} (cle-|s.west);

  \coordinate (euw) at ($(e.mid west) + (0,1.5)$);
  \coordinate (sue) at ($(s.mid east) + (0,1.3)$);
  \coordinate (elw) at ($(e.mid west) + (0,-.4)$);
  \coordinate (sle) at ($(s.mid east) + (0,-.6)$);
  \coordinate (slle) at ($(s.mid east) + (0,-1.5)$);
  \path[->,above] (euw) edge node {$x$} (euw-|s.east);
  \path[->,below] (sue) edge node {$\overline{\mathcal{C}(x,y)}$} (sue-|e.west);
  \path[->,above] (elw) edge node {$l$} (elw-|s.east);
  \path[->,below] (sle) edge node {$l'$} (sle-|e.west);
  \path[->,below] (slle) edge node {$\overline{\mathcal{C}}$} (slle-|e.west);

\end{tikzpicture}
%%% Local Variables: 
%%% mode: latex
%%% TeX-master: "ft"
%%% End: }

  \caption{The first condition of Definition~\ref{defn:sec} applied to leakage
    resilience. In order to prove that the converter $\pi_A$ is part of a
    leakage-resilient protocol $\pi$, one has to show the existence of a
    simulator $\sigma_E$ such that execution of $\pi_A$ running with the real
    resource $\mathcal{R}$ is indistinguishable from execution of the ideal
    resource $\mathcal{S}$ with $\sigma_E$.}
  \label{fig:lr-sim}
\end{figure}
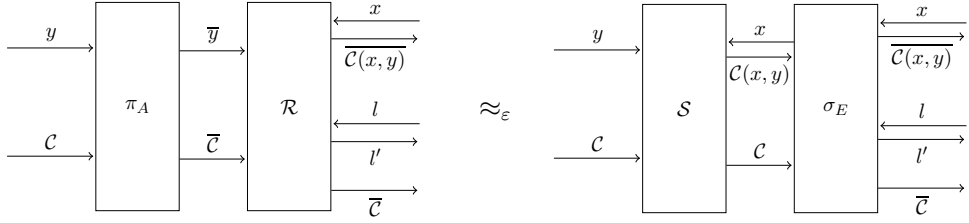

Our description of the real resource captures leakage in a very general form, in
the sense that $\mathcal{C},l$ and $l'$ may be arbitrary. However, in order for
leakage resilience to be possible, it is necessary to specify how the leakage
requests $l$ are chosen. Hence, we construct schemes that are resilient against
leakage with some particular structure. In order to formalize this, we'll make
the scenario presented above more concrete, by restricting the interactions in
the following way. Eve interacts with the resource in rounds. In each round, she
inputs pairs $(x,l)$ and receives outputs $(f(x,y),l')$.\footnote{We could have
  the function output be additionally a function of $l$; this situation would
  capture \textit{tampering}, in which Eve can introduce faults to the
  wires in the circuit~\cite{ishai2006private}.}

The particular strategy used to choose the leakage requests $l$ is referred to
as a \textit{leakage model} $L$. The leakage model is a set of allowed leakage
requests. A leakage request is a probability distribution over leakage
functions. That is, if the set of possible leakage functions is $\{l_i\}$, Eve
sends $l = \{(l_i,p_i)\}$, where $l_i$ is chosen with probability $p_i$. Each
$l_i$ is a function of the set of values assigned to the wires in the circuit,
which we denote by $W_{\mathcal{C}}(x,y)$ (when the circuit is given inputs $x$
and $y$). If the function $l_j$ is chosen, Eve receives $l' =
l_j(W_{\mathcal{C}}(x,y))$. Not that the output of $l_j$ may or may not
include the index $j$, depending on the leakage model.

As an example, Eve could choose at every round a bounded number of
wires of the circuit to leak. (These are the ``probing attacks'' considered
in~\cite{ishai2003private}.) Or one could restrict Eve to leakage
functions that are computed with constant-depth
circuits~\cite{faust2010protecting}. In any case, we define leakage-resilient
compilers with respect to a particular leakage model in the following way. In
the definition below, $\mathcal{S}$ and $\mathcal{R}$ are respectively the ideal
resource and the real resource defined above.

\begin{definition}
  The protocol $\pi$ is an \textnormal{$\varepsilon$-leakage-resilient compiler
    against leakage model $L$} if it securely constructs $\mathcal{S}$ from
  $\mathcal{R}$ within $\varepsilon$, where the leakage requests are drawn from
  $L$.
\end{definition}

\begin{rmk}
  \label{rmk:pie}
  When constructing the protocol $\pi = \{\pi_A, \pi_E\}$, we'll focus our
  attention on the converter $\pi_A$. This is due to the fact that given an
  appropriate $\pi_A$, it's always easy to construct a converter $\pi_E$ such
  that Condition~\eqref{eq:sec2} in Definition~\ref{defn:sec} is satisfied. We
  construct $\pi_E$ as follows. $\pi_E$ relays inputs $x$ from Eve to
  $\mathcal{R}$ and decodes the encoded output $\overline{\mathcal{C}(x,y)}$,
  which is then passed to Eve. It also decodes the encoded circuit $\mathcal{C}$
  before passing it to Eve. Crucially, $\pi_E$ does not provide the leakage
  interface (receiving $l$ and outputting $l'$) to Eve, effectively shielding
  the leakage from her.
\end{rmk}

\vspace{.5em}

\noindent \textbf{Independent leakage.} In this paper (Section~\ref{sec:lrg}) we
provide as a concrete example an explicit leakage-resilient compiler for a
concrete leakage model which we call \textit{independent leakage}. Independent
leakage is characterized by having every wire in the circuit potentially leak,
each with a fixed probability $p$.

This model can be formalized in the following way. Let $n$ be the number of
wires in the circuit. We label each wire in the circuit with an index $i$ with
$1 \leq i \leq n$. Now let $w$ be a binary string of length $n$ and $l_w(x,y)$
be the values of the wires $i$ with $w_i = 1$ when the circuit has $(x,y)$ as
input, along with the string $w$ to indicate which wires have leaked. For every
round of interaction, the probability $p_w$ that the leakage function $L_w$ was
chosen is given by $p_w = \Pr(X = |w|)$ where $X$ follows a binomial
distribution with parameters $n$ and $p$. The leakage model $L$ is then $L =
\{(l_w,p_w) \colon w \in \{0,1\}^n\}$. (Note that there's only one possible
leakage request.)

\vspace{.5em}

\noindent \textbf{Leak-free components.}  Many of the leakage-resilient
constructions require a small component of the circuit to be trusted or
``leak-free'', meaning its internal wires don't leak to Eve. For our
construction, this includes wires coming out of leak-free components. We
incorporate this requirement in our definition of leakage resilience by adding
the restriction that the leakage function $l$ may not depend on any wires inside
the leak-free components, or the wires coming out of them.

The only leak-free component we use in this work is a source of random bits: a
gate that takes no input and outputs a uniformly distributed random bit (which
is assumed not to leak). This is used in the part of our proof where we
transform a quantum circuit into a classical one; we leave open the question of
whether this is necessary in general. We note that our requirement is different
from than the one used in~\cite{faust2010protecting}: we require only uniformly
distributed bits, while their transformation requires bits distributed according
to an arbitrary (although fixed) distribution. On the other hand, the output
wires from their leak-free component is allowed to leak, whereas we require that
leak-free components not leak before interacting with other components.

\subsection{Fault tolerance}

A noisy quantum circuit $\mathcal{E}$ is an implementation of a unitary $U$
acting on subsystems $S$ (the \textit{data} subsystem) and $E$ (the
\textit{environment}). The idea of fault tolerance is to perform computations on
a noisy circuit reliably, by using the noisy circuit to execute encoded
operations such that at any point of the computation, the encoded state of the
data is close to the state of an ideal circuit that implements $U$ perfectly.

More precisely, let $\mathcal{C}$ be a quantum circuit with inputs and outputs
in $\mathcal{H}_2^{\otimes k}$. We assume that the circuit acts on classical
inputs; this is done by preparing a known state $\ket{0}^{\otimes k}$ and then
encoding a classical input $x$ into it, obtaining the state denoted
$\ket{x}$. The possible input states $\ket{x}$ form the computational
basis. Then, for any $x$, the circuit implements the action of a unitary $U_x$
on $\ket{0}^{\otimes k}$.

A fault-tolerant simulation for $\mathcal{C}$ works as follows. Let
$\tilde{\mathcal{C}}$ be a noisy quantum circuit acting on
$\mathcal{H}_2^{\otimes n}$. The noise incurred on the circuit is specified by a
\textit{noise model} $\mathcal{N}$, the form of which is defined in
Section~\ref{sec:phase-noise}. The data subsystem is initialized in the state
$\ket{0}^{\otimes n}$. The input $x \in \{0,1\}^k$ is then encoded in a quantum
error-correcting code of length $n$; we denote the encoded input by
$\enc{x}$. Execution then proceeds the same way as in the ideal circuit, except
that the gates are replaced by encoded gates (i.e., operations on encoded
data). We can thus compare the states of $\mathcal{C}$ and $\tilde{\mathcal{C}}$
at an arbitrary step of the computation. Additionally, after each step of the
computation, error correction is performed in order to keep the state of the
circuit in the encoded space of the code.  We say that $\tilde{\mathcal{C}}$ is
\textit{$\varepsilon$-reliable against noise $\mathcal{N}$} if for every step,
the state of $\mathcal{C}$ is equal to the logical value of the state of
$\tilde{\mathcal{C}}$ except with probability $\varepsilon$.

We note that the way we use fault tolerance is slightly different from the usual
treatment. As defined above (and as it's commonly done), the inputs are encoded
as part of the circuit. But our circuits also receive additional encoded
inputs. This is done purely for convenience and does not make the definition
stronger.

The focus in fault tolerance is in implementing so-called
\textit{gadgets}---components such as logical gates and error correction, that
can then be used as building blocks to construct reliable circuits. The goal is
usually to implement a set of gates that is universal for quantum
computation. However, as we will see in Section~\ref{sec:lrg}, since we only
seek to perform classical computation, we only need a restricted set of gates.

\section{Leakage models and quantum noise}
\label{sec:phase-noise}

In this section we show that for an arbitrary classical circuit with leakage
according to some leakage model $L$, we can view the circuit as a noisy quantum
circuit with a corresponding noise model $\mathcal{N}$, and that if the quantum
circuit is reliable with respect to $\mathcal{N}$ then it is also
leakage-resilient against $L$. To this end, we must first make matters more
concrete. A general quantum noise model is an operation $\mathcal N$ on quantum
states on $\mathcal{H}_2^{\otimes n}$ and takes the form
\begin{align}
	\label{eq:phasecptp}
  \mathcal{N}(\rho) = \sum_k p_k E_k \rho E_k^{\dagger},
\end{align}
where $E_k$ are arbitrary operators taking $\mathcal{H}_2^{\otimes n}$ to itself and $p_k\geq 0$ with $\sum_k p_k=1$. If
$E_k$ has the form $E_k = \bigotimes_{i=1}^n Z_i^{a_{i,k}}$ for $a_{i,k} \in
\{0,1\}$ and $Z$ specified by $Z\ket{x}=(-1)^x\ket{x}$, 
then we call $\mathcal{N}$ a \textit{phase noise} model.

The following lemma relates classical leakage models and quantum noise
models.

\begin{lemma}
	\label{lem:phasenoise}
    For any circuit $\mathcal{C}$ running with a leakage request
    $\{l_j,p_j\}_{j=1}^m$ taken from a leakage model $L$, there exists a noise
    model $\mathcal{N}$ such that $\mathcal{C}$, viewed as a quantum circuit, is
    a quantum circuit subject to noise $\mathcal{N}$. Furthermore, $\mathcal{N}$
    has the form
    \begin{equation}
      \label{eq:noise}
      \mathcal N(\rho)=\frac 1d \sum_{j=1}^m p_j \sum_{k=0}^{d-1}F_j^k \rho
      (F_j^k)^\dagger,
    \end{equation}
    where $F_j^k$ is the operator taking $\ket{s}$ to $\omega^{k l_j(s)} \ket{s}$
    and $\omega$ is a primitive $d$th root of unity.
\end{lemma}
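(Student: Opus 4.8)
The plan is to write down the quantum-mechanical description of a single leakage attack and then read off the induced channel on the circuit register. Fix a round in which Eve submits the leakage request $\{l_j,p_j\}_{j=1}^m$, and recall that on inputs $(x,y)$ the wires carry the computational-basis configuration $s=W_{\mathcal C}(x,y)$. Quantum-mechanically, Eve obtaining the value $l_j(s)$ is the isometry $V_j\colon\mathcal H_S\to\mathcal H_S\otimes\mathcal H_E$ with $V_j\ket s=\ket s\otimes\ket{l_j(s)}$ --- a fresh environment register into which $l_j(s)$ is copied (legitimate, since $s$ labels a computational-basis state) --- after which Eve holds $\mathcal H_E$. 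Because $V_j^\dagger V_j=\sum_s\kb s=I$, the map $\rho\mapsto\tr_E[V_j\rho\,V_j^\dagger]$ is a genuine CPTP map on $\mathcal H_S$. Since $l_j$ is drawn with probability $p_j$, the net effect on the data is $\mathcal N(\rho)=\sum_{j=1}^m p_j\,\tr_E[V_j\rho\,V_j^\dagger]$, so $\mathcal C$ viewed as a quantum circuit is a noisy quantum circuit with a channel of the form \eqref{eq:phasecptp}; whether the register handed to Eve also records the index $j$ is immaterial, as keeping a classical copy of $j$ on Eve's side does not alter the reduced state on $\mathcal H_S$. It remains to bring $\mathcal N$ into the stated form \eqref{eq:noise}.

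Computing the Stinespring channel: writing $\rho=\sum_{s,s'}\rho_{s,s'}\ketbra s{s'}$ in the computational basis, we have $V_j\rho\,V_j^\dagger=\sum_{s,s'}\rho_{s,s'}\ketbra s{s'}\otimes\ketbra{l_j(s)}{l_j(s')}$, and tracing out $\mathcal H_E$ multiplies the $(s,s')$ coefficient by $\braket{l_j(s')}{l_j(s)}=\delta_{l_j(s),l_j(s')}$. Hence $\tr_E[V_j\rho\,V_j^\dagger]=\sum_{s,s'\,:\,l_j(s)=l_j(s')}\rho_{s,s'}\ketbra s{s'}$ is the computational-basis dephasing map that annihilates exactly the coherences between basis states that $l_j$ tells apart, and therefore $\mathcal N(\rho)=\sum_j p_j\sum_{s,s'\,:\,l_j(s)=l_j(s')}\rho_{s,s'}\ketbra s{s'}$.

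The final step expresses this dephasing map as a uniform mixture of diagonal unitaries by a character-sum identity. Let $d$ be an upper bound on the number of distinct values taken by any single $l_j$, and for each $j$ identify the range of $l_j$ with a subset of $\{0,1,\dots,d-1\}$; this is an injective relabeling that changes only the contents of $\mathcal H_E$ and hence not the reduced channel just computed, so we may assume $l_j(s)\in\{0,\dots,d-1\}$. With $\omega$ a primitive $d$th root of unity and $F_j^k$ defined by $F_j^k\ket s=\omega^{k l_j(s)}\ket s$, one has $\tfrac1d\sum_{k=0}^{d-1}F_j^k\rho\,(F_j^k)^\dagger=\sum_{s,s'}\rho_{s,s'}\ketbra s{s'}\cdot\tfrac1d\sum_{k=0}^{d-1}\omega^{k(l_j(s)-l_j(s'))}$, and since $l_j(s),l_j(s')\in\{0,\dots,d-1\}$ the inner geometric sum $\tfrac1d\sum_{k=0}^{d-1}\omega^{km}$ equals $1$ when $d\mid m$ (i.e.\ $l_j(s)=l_j(s')$) and $0$ otherwise. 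Substituting this into the expression for $\mathcal N$ from the previous paragraph reproduces \eqref{eq:noise}.

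There is no serious obstacle: the content is just the observation that Eve learning a function of the wire values dephases the circuit register in the computational basis, plus the identity $\tfrac1d\sum_{k=0}^{d-1}\omega^{km}=[\,d\mid m\,]$. The two points needing care are (i) justifying that a copy-into-ancilla isometry followed by a partial trace is the correct quantum model of Eve obtaining $l_j(s)$, and (ii) the choice of $d$ together with the observation that the harmless relabeling of the range of $l_j$ does not change the induced channel. One may also remark that $\mathcal N$ becomes honest phase noise in the sense of Section~\ref{sec:phase-noise} precisely when each $l_j$ is, up to an additive constant, the parity of a fixed subset of the wires --- so that $d=2$ and each $F_j$ is a tensor product of $Z$ operators --- whereas the lemma only claims the more general form \eqref{eq:noise}.
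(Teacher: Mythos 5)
Your proposal is correct and follows essentially the same route as the paper's own proof: model the leakage as the copy-into-environment isometry $\ket{s}\otimes\ket{0}\mapsto\ket{s}\otimes\ket{l_j(s)}$, trace out Eve's register to obtain the computational-basis dephasing map, re-express it as the uniform mixture $\frac1d\sum_k F_j^k\rho(F_j^k)^\dagger$, and take the convex combination over $j$. You are merely more explicit than the paper about the character-sum identity and about choosing a single $d$ that bounds the ranges of all the $l_j$ simultaneously, both of which are welcome clarifications rather than deviations.
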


Now we can state the formal connection between leakage resilience and reliable
quantum computation:

\begin{theorem}
  \label{thm:ft-lr}
  For every circuit $\mathcal{C}$ and leakage model $L$, there exists a noise
  model $\mathcal{N}$ as specified in Lemma~\ref{lem:phasenoise} such that for
  any $\varepsilon$-reliable (against $\mathcal{N}$) implementation
  $\tilde{\mathcal{C}}$ of $\mathcal{C}$ with encoding function $y \mapsto
  \enc{y}$, the protocol $\pi = \{\pi_A, \pi_E\}$ that, given $(y,\mathcal{C})$
  as input, outputs $(\enc{y},\tilde{\mathcal{C}})$, and $\pi_E$ is as specified
  in Remark~\ref{rmk:pie}, is a $2\sqrt{\varepsilon}$-leakage-resilient compiler
  against $L$.
\end{theorem}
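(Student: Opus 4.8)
The plan is to establish the two conditions of Definition~\ref{defn:sec} with $\varepsilon$ replaced by $2\sqrt{\varepsilon}$. Condition~\eqref{eq:sec2} is immediate from Remark~\ref{rmk:pie}: with $\pi_E$ relaying inputs to $\mathcal{R}$, decoding the encoded output and the encoded circuit it receives, and never forwarding the leakage interface, Eve's view under $\pi_A\pi_E\mathcal{R}$ coincides with her view under $\mathcal{S}$ except on the event that the decoded output disagrees with $\mathcal{C}(x,y)$, which by $\varepsilon$-reliability has probability at most $\varepsilon\le 2\sqrt{\varepsilon}$. So the real content is condition~\eqref{eq:sec1}, which requires a simulator $\sigma_E$ with $\pi_A\mathcal{R}\approx_{2\sqrt{\varepsilon}}\sigma_E\mathcal{S}$.

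I would take the following simulator. On receiving $\mathcal{C}$ from $\mathcal{S}$, $\sigma_E$ computes the fault-tolerant circuit $\tilde{\mathcal{C}}$ and the encoder $y\mapsto\enc{y}$ and hands $\tilde{\mathcal{C}}$ to Eve. On a query $(x,l)$ with $l=\{(l_j,p_j)\}$, it forwards $x$ to $\mathcal{S}$, receives $z:=\mathcal{C}(x,y)$, internally simulates a noisy run of $\tilde{\mathcal{C}}$ with the output register prepared as $\enc{z}$ and the remaining wires filled in according to the (as we will argue, essentially $y$-independent) law induced by the phase-noise model $\mathcal{N}$ associated to $l$ via Lemma~\ref{lem:phasenoise}, applies the requested leakage functions to these simulated wire values to produce $l'$, and returns $(\enc{z},l')$ to Eve. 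To analyze this I would first use Lemma~\ref{lem:phasenoise} to reinterpret $\pi_A\mathcal{R}$: the leaky execution of $\tilde{\mathcal{C}}$ on $\enc{y}$ is exactly the quantum circuit $\tilde{\mathcal{C}}$ run under $\mathcal{N}$, with Eve's leakage register performing on the circuit the measurement whose outcome projectors are $\Pi_v=\tfrac1d\sum_k\omega^{-kv}F_j^k$.

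The key step is then to show that at the end of the interaction the joint state of Eve's leakage register and the encoded output register is $2\sqrt{\varepsilon}$-close in trace distance to a product state $\enc{z}\!\bra{\bar z}\otimes\xi$, where $\xi$ is a distribution determined by the circuit, $x$, and the leakage request but not otherwise by $y$. This has two ingredients. First, $\varepsilon$-reliability says the output register decodes to $z$ except with probability $\varepsilon$; applying the gentle-measurement lemma to the decoding projector, acting on the joint state, simultaneously pins the output register to $\enc{z}$ and decouples it from the leakage register at a cost of $2\sqrt{\varepsilon}$ in trace distance, which is the source of the square-root loss. Second, since the leakage is a phase-type error that the fault-tolerant scheme corrects, the error-correction (Knill--Laflamme-type) conditions force the statistics of that error, hence of the leakage record, to carry no information about the encoded logical content beyond what the output itself reveals; this is what makes $\xi$ essentially independent of $y$ and thus reproducible by $\sigma_E$, which knows $z$ and the circuit. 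Since $\sigma_E\mathcal{S}$ outputs exactly $\enc{z}$, an independent draw of $\xi$, and the deterministic (known) encoded circuit, this gives $\pi_A\mathcal{R}\approx_{2\sqrt{\varepsilon}}\sigma_E\mathcal{S}$, absorbing the failure probability of the whole interaction into $\varepsilon$ so that a single application of the gentle-measurement lemma suffices.

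The main obstacle is the second ingredient above: $\varepsilon$-reliability as stated only controls the marginal on the data register, whereas we need the \emph{environment} of the leakage---the information Eve actually obtains---to be decoupled from the secret. Turning ``the encoded data is correct with probability $1-\varepsilon$'' into ``the leakage is nearly independent of $y$'' requires exploiting the specific structure established in Lemma~\ref{lem:phasenoise} (the noise is phase noise, so the leakage measurement lies in the span of the correctable errors), combining it with the error-correction conditions, and checking that this survives the error-correction steps interleaved throughout $\tilde{\mathcal{C}}$ rather than a single code block. A secondary point is the bookkeeping of the factor $2\sqrt{\varepsilon}$ across the possibly many interaction rounds, which I would handle by viewing the entire interaction as one reliable computation; if instead $\varepsilon$ bounds only per-step or per-query failure, the bound degrades by the obvious union-bound factor.
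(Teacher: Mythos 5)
Your skeleton matches the paper's: condition~\eqref{eq:sec2} is dispatched via Remark~\ref{rmk:pie}, Lemma~\ref{lem:phasenoise} recasts the leaky execution as the quantum circuit under phase noise with Eve holding the purifying environment, and the whole argument reduces to showing that Eve's leakage record is (nearly) independent of the logical content, so that a simulator knowing only $\mathcal{C}$, $x$ and $\mathcal{C}(x,y)$ can reproduce it. You also correctly locate the crux. But the step you offer to bridge it does not work as stated, and the step you defer is exactly the one the paper supplies. The gentle-measurement move does not ``simultaneously pin the output register to $\enc{z}$ and decouple it from the leakage register'': $\varepsilon$-reliability is a statement about the \emph{logical} value, so the relevant decoding projector is high-rank (it admits all correctable syndromes), and projecting onto a high-rank subspace with high probability leaves the residual state inside that subspace free to remain correlated with Eve's register. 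Decoupling would follow from near-purity of the data register, which reliability does not assert. Worse, the leakage is accumulated from intermediate wires throughout the computation, so even a genuine decoupling of the \emph{final} output register from Eve's register would not be the statement you need.

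The paper closes this gap with an information--disturbance tradeoff \cite[Theorem 3]{kretschmann_information-disturbance_2008}: viewing the noisy circuit followed by the inverse of the ideal unitary as a channel from input to output, $\varepsilon$-reliability makes this channel $\varepsilon$-close to the identity in the appropriate norm, and the theorem then forces the \emph{complementary} channel---everything leaking to the environment, i.e.\ to Eve, over the whole run---to be $2\sqrt{\varepsilon}$-close to a channel with constant output. That single step converts ``the data survives'' into ``the environment learns nothing,'' yields the $2\sqrt{\varepsilon}$ in the statement, and handles all rounds and all intermediate wires at once; adaptivity is then dealt with by noting the simulator already knows the previous inputs, outputs and leakages. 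Your instinct to invoke Knill--Laflamme-type conditions is pointing at the same phenomenon (approximate correctability is equivalent to the complementary channel being approximately constant), but your proposal leaves that equivalence, and its quantitative form, unproven---and it is the entire content of the theorem.
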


\begin{proof}[Proof of Lemma~\ref{lem:phasenoise}]
  First consider the case where $L$ is such that at each round, a single leakage
  function $l$ is chosen. Let $S$ be the subsystem representing the wire
  assignments $s := W_{\mathcal{C}}(x,y)$ in the circuit and $E$ be Eve's
  subsystem.  In quantum-mechanical terms, the action of the leakage is the
  transformation
 \begin{align}
 	\ket{s}^S\otimes \ket{0}^E\quad\to\quad \ket{s}^S\otimes\ket{l(s)}^E
 \end{align} 
 for each $s$. To determine the action of the leakage on system $S$ itself, consider an arbitrary 
 superposition state $\ket{\psi}^S=\sum_s \sqrt{p_s}\ket{s}^S$ for some probability distribution
 $p_s$. After applying the transformation and tracing out subsystem $E$, $\rho=\kb{\psi}$
 becomes
 \begin{align}
 	\rho\quad\to\quad\mathcal{N}(\rho)=\sum_{s}\sum_{s':l(s')=l(s)}\sqrt{p_sp_{s'}}\ketbra{s}{s'},
 \end{align}
 as all coherence is lost between parts of the state with different values of $l$. 
 But this is also the output state if the transformation were instead
 \begin{align}
 	\ket{s}^S\otimes \ket{0}^E\quad\to\quad \frac 1{\sqrt d}\sum_k \omega^{k l(s)}\ket{s}^S\otimes\ket{k}^E=\frac 1{\sqrt d}\sum_k F^k\ket{s}^S\otimes \ket{k}^E,
 \end{align}
 where $d$ is the size of the output of $l$, $\omega$ a primitve $d$th root of
 unity, and $F^k$ the operator taking $\ket{s}$ to $\omega^{k
   l(s)}\ket{s}$. The equivalence of these two maps can be seen by tracing out $E$. Thus, the noise model can
 be expressed as 
 \begin{align}
 \label{eq:noisemodel}
 \mathcal N(\rho)=\frac 1d\sum_{k=0}^{d-1}F^k \rho
 (F^k)^\dagger.\end{align}
 
 \begin{comment}
 This is nearly the form we are interested in.  
 Clearly, $F^k$ is diagonal in the basis of $\ket{s}$. Therefore, it can be
 decomposed into a linear combination of operators of the form $\bigotimes_{i=1}^n
 Z_i^{\alpha_{i,k}}$, as these span the space of diagonal operators: $F^k=\sum_j
 f_{j,k} \left(\bigotimes_{i=1}^n Z_i^{\alpha_{i,j}}\right)$. Now consider the related phase noise model
 \begin{align}
 \label{eq:noisemodelD}
 \mathcal N'(\rho)\propto \frac1d\sum_{k=0}^d \sum_j |f_{j,k}|^2 \left(\bigotimes_{i=1}^n Z_i^{\alpha_{i,j}}\right)\rho \left(\bigotimes_{i=1}^n Z_i^{\alpha_{i,j}}\right)^\dagger,
 \end{align}
 in which the operators $F_k$ are discretized into their $Z_i^{\alpha_{i,j}}$ components. According to \cite[Theorem 10.2]{nielsen_quantum_2000}, an error-correcting code which can correct the action of \eqref{eq:noisemodelD} can also correct the action of \eqref{eq:noisemodel}, for any choice of coefficients $f_{jk}$. 
  In this way, every leakage
 function corresponds to a phase noise model as in \eqref{eq:phasecptp}. 
 \end{comment}

 Now consider the case where the leakage function is chosen probabilistically
 from a set $\{l_j\}_{j=1}^m$, where $l_j$ is chosen with probability $q_j$. In
 this case, a corresponding noise model is a convex combination of noise
 operators of the form in~\eqref{eq:noisemodel}. That is, we can write
\begin{equation}
  \label{eq:noise-convex}
  \mathcal{N}(\rho) = \frac{1}{d} \sum_{j=1}^m q_j \sum_{k=0}^{d-1} F_j^k \rho (F_j^k)^\dagger
\end{equation}
where $F_j^k$ is the operator taking $\ket{s}$ to $\omega^{k l_j(s)}\ket{s}$.
\end{proof}

\begin{proof}[Proof of Theorem~\ref{thm:ft-lr}]
  Every leakage request from $L$ has a corresponding noise model as given by
  Lemma~\ref{lem:phasenoise}. The fault-tolerant implementation of the circuit
  $\mathcal{C}$ includes a compiled circuit $\overline{\mathcal{C}}$ as well as
  a method to encode inputs. The converter $\pi_A$ receives the secret $y$ as
  well as a circuit as input, and outputs the encoded secret $\enc{y}$ and
  $\overline{\mathcal{C}}$. Following the fault-tolerant construction, inputs
  $x$ received from $E$ are encoded in $\enc{x}$, the quantum
  circuit is used to compute $\enc{\mathcal{C}(x,y)}$ and the decoded output is
  sent back to $E$. We therefore have a scenario as in Figure~\ref{fig:lr-real}.

  We now have to show the existence of a simulator $\sigma$ such that
  condition~\eqref{eq:sec1} (Fig.~\ref{fig:lr-sim}) is satisfied. We'll prove
  that such a simulator exists by showing that for any step of the computation,
  the leakage received by $E$ is independent of the circuit's current state
  (that is, the intermediate values encoded by its wires).

  By Lemma~\ref{lem:phasenoise}, the leakage model results in noise
  $\mathcal{N}$ on the quantum circuit $\mathcal{C}$. Since $\mathcal{C}$ is
  $\varepsilon$-reliable against $\mathcal{N}$, there can be no 
  errors on encoded quantum information at any point of the computation except with probability
  $\varepsilon$ (otherwise, the state of the reliable circuit
  would be different from the state in the ideal circuit). Now observe that the action of the ideal circuit is a unitary operation,
 and by undoing this at the output of the actual circuit, the result would be an approximate identity channel (having trivial action) connecting
  the input of the circuit to its output. According to \cite[Theorem 3]{kretschmann_information-disturbance_2008}, the channel to the eavesdropper must therefore have an approximately constant output, regardless of the input to the circuit. (It would also be possible to use an uncertainty principle recently derived by one of us for this argument~\cite{renes_operationally-motivated_2014}.) 
  Specifically, since the circuit is $\varepsilon$-reliable, the circuit followed by the reverse of the ideal unitary action is $\varepsilon$ close to an identity channel in the relevant norm (see \cite{kretschmann_information-disturbance_2008} for details), and this implies that the eavesdropper's output is $2\sqrt{\varepsilon}$ close to a channel with a constant output. 
  Therefore, the state at any point in the circuit, which we denote by $\ket{s}$, is
  independent of $l(W_{\mathcal{C}}(x,y))$ (except with probability
  $2\sqrt{\varepsilon}$).

We now note that for the case of an adaptive
adversary, $l'$ can depend on previous inputs and outputs, as well as previous
leakages, since the choice of $l$ can depend on those values. But because those
values are also available to the simulator, and because $l'$ is independent of
the logical value of $s$, $l'$ can be generated by the simulator.

\end{proof}

\section{From quantum to classical circuits}
\label{sec:red}

Our goal is to have \textit{classical} leakage resilience, and fault-tolerant
techniques yield a circuit that in general does not have a classical
translation. In this section, we show how to make a particular set of quantum
components classical. Then, in Section~\ref{sec:lrg}, we use these components to
make a fault-tolerant implementation of a quantum gate that is universal for
classical computation---namely, the Toffoli gate. Due to the fact that the
components have a classical translation, we also have a leakage-resilient
classical gate.

In order to show the equivalence between the quantum and classical circuits,
we analyze each component in the following scenario: we assume that the
quantum component has $Z$ basis inputs (i.e., the inputs are classical), and
that after execution the outputs are measured in the $Z$ basis. We then show
that for each component there exists a classical circuit that, when given the
same inputs, gives the same outputs as the quantum component after measuring. We
then use the components to construct encoded \textit{gadgets} that will be used
to implement the Toffoli gate. Since the components have a classical
translation, the gadgets and the Toffoli gate have one as well.

In order to be able to combine the classical translation with
Theorem~\ref{thm:ft-lr}, we need it to preserve the form of the original quantum
circuit, so that the connection between leakage and noise model can be made
explicitly. This is achieved by having, for each component, a classical
translation that has the same wires as the quantum component.

For the classical scenario, we assume that we can generate random bits in a
leak-free manner. We'll see that this is needed in order to make state
preparation of $\ket{+}$ classical. We prove the following result.

\begin{theorem}
  \label{thm:c-q}
  Let $\mathcal{C}$ be a quantum circuit accepting classical states as input and
  containing only $X,Z$, CNOT, CZ and Toffoli gates, state preparation of
  $\ket{0}$ and $\ket{+}$ and measurement in the $X$ and $Z$ bases. Then each
  gate $G$ in the circuit can be replaced by a classical circuit $G'$ with the
  same wires as $G$. Furthermore, assuming a leak-free source of random bits,
  there exist procedures for state preparation of $\ket{0}$ and $\ket{+}$ and
  measurement in the $X$ and $Z$ bases such that each of them can be replaced by
  a classical circuit with the same wires. If all these replacements are made,
  then the resulting classical circuit $\mathcal{C'}$ gives the same outputs as
  applying $\mathcal{C}$ followed by measuring its outputs in the $Z$ basis.
\end{theorem}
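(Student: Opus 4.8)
The plan is to prove Theorem~\ref{thm:c-q} gate by gate, tracking a single invariant: at every wire carrying classical data, the quantum state (if the circuit were run quantumly on $Z$-basis inputs) is a computational-basis state $\ket{z}$ conditioned on the outcomes of any $X$-basis measurements performed so far, and the classical circuit $\mathcal{C}'$ carries the corresponding bit value $z$ (together with those measurement outcomes as additional classical side information). The subtlety is that $\ket{+}$-preparation and $X$-basis measurement take us out of the computational basis, so I will instead maintain a stabilizer-style description: at each cut through the circuit, the quantum state is a uniform superposition over an affine subspace of $\{0,1\}^n$ determined by linear constraints, and $\mathcal{C}'$ holds one representative sample from that subspace drawn using the leak-free random bits, such that the distribution of $Z$-measuring the quantum state equals the distribution of the classical register. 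I would state this as a lemma and prove it by induction on the number of gates.

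First I would handle the classical-reversible gates. The Toffoli gate acts as the classical Toffoli on basis states; CNOT acts as classical XOR-into-target; so their classical translations $G'$ are literally the same Boolean gates on the same wires, and the invariant is preserved trivially. The $X$ gate is a classical NOT. State preparation of $\ket{0}$ is the classical constant $0$. These contribute nothing hard.

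Next come the phase-type gates $Z$ and CZ. On computational-basis inputs these only attach a global or relative phase $(-1)^{z}$ or $(-1)^{z_1 z_2}$; since the eventual measurement is in the $Z$ basis and no later gate is $X$-basis-sensitive to that phase except through an intervening $\ket{+}$/$X$-measurement, I must be careful. The clean way: represent the state throughout as $\ket{\text{affine subspace}}$ with a phase function (a degree-$\le 2$ $\mathbb{F}_2$-polynomial in the free coordinates). $Z$ and CZ update the phase polynomial; $\ket{+}$-preparation on a fresh wire enlarges the free-coordinate set; $X$-basis measurement on a wire projects onto $\ket{\pm}$, which forces a linear constraint relating that wire's phase-derivative to the measurement outcome, i.e. it reads out (a parity of) other wires and possibly fixes a previously free coordinate. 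This is exactly the stabilizer/Clifford+Toffoli simulation picture, specialized to the gate set in the hypothesis. The classical translation then is: for $\ket{+}$-prep, output a fresh uniform random bit from the leak-free source (same single wire); for $X$-basis measurement of a wire, output the appropriate parity of the classical values currently on the relevant wires XORed with a fresh random bit (or, when the wire was ``pure'' $\ket{0}/\ket{1}$, a uniform random bit), consistent with the forced linear relation; $Z$ and CZ are the identity on the classical wires (they only affect phases, which are invisible to $Z$-measurement and are correctly bookkept into the forced parities at the next $X$-measurement). The output $Z$-measurements of $\mathcal{C}$ then just read off the classical wires, matching $\mathcal{C}'$ by the invariant.

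The main obstacle is the bookkeeping at $X$-basis measurements and $\ket{+}$-preparations: one must show that the classical sampling rule reproduces not just the marginal but the full joint distribution of all $Z$-outcomes and all $X$-measurement results, including correlations induced by the phase polynomial (e.g. a $\cnot$ followed by measuring the control in the $X$ basis creates correlation with the target). I would prove this by checking that, conditioned on all prior classical randomness, the quantum post-measurement state still has the claimed affine-subspace-plus-phase-polynomial form with the correct conditional distribution — a routine but careful computation using $\langle \pm|_i$ applied to $\frac{1}{\sqrt{|A|}}\sum_{a\in A}(-1)^{\phi(a)}\ket{a}$. Once the invariant is established for every gate type in the list, the theorem follows immediately: run $\mathcal{C}'$ with the leak-free randomness, and its output register is distributed exactly as the $Z$-basis measurement of $\mathcal{C}$'s output, with each replaced component using the same wires as the original, as required for later composition with Theorem~\ref{thm:ft-lr}.
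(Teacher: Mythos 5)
Your high-level decomposition coincides with the paper's: the classical gates ($X$, CNOT, Toffoli, $\ket{0}$-preparation) are taken over verbatim, $Z$ and CZ become identities, $\ket{+}$-preparation becomes one leak-free random bit, and $X$-measurement becomes a randomization step. But the machinery you wrap around this has two concrete gaps. First, your inductive invariant --- uniform superposition over an affine subspace with a degree-$\le 2$ phase polynomial, updated ``stabilizer-style'' --- does not survive the Toffoli gate, which is in the allowed gate set and is applied to non-classical wires in exactly the circuits this theorem must cover (e.g.\ transversal Toffolis acting on $\enc{+}$ blocks in the ancilla preparation of Section~\ref{sec:lrg}). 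Toffoli maps the coordinate subspace generated by the free bits to the graph of a nonlinear function and conjugates subsequent $Z$/CZ phases to polynomials of unbounded degree, so the ``forced linear relation'' you read off at an $X$-measurement need not be linear and the induction step breaks there. Second, your replacement for $X$-basis measurement outputs ``the appropriate parity of the classical values currently on the relevant wires,'' i.e.\ it reads wires other than the one being measured. The theorem explicitly requires each replacement to be a classical circuit \emph{with the same wires} as the component it replaces; this wire-by-wire correspondence is precisely what lets Theorem~\ref{thm:ft-lr} identify leakage on $\mathcal{C}'$ with phase noise on $\mathcal{C}$, so a non-local replacement defeats the purpose even if it were distributionally correct.

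The underlying issue is that you are proving a stronger statement than the one asked: you aim to reproduce the joint distribution of \emph{all} outcomes, including $X$-measurement results and their mutual correlations. The theorem only claims agreement of the final $Z$-basis readout, and the paper's proof is correspondingly local and simple: an $X$-measurement with its outcome averaged out is the single-wire channel $\rho\mapsto\tfrac12(\rho+X\rho X)$, i.e.\ a uniformly random bit flip implemented by XORing in one leak-free random bit, and the $\ket{+}$ translation is justified by exhibiting a leak-free correction whose omitted $Z$-part is invisible in the $Z$ basis. Correlations among $X$-outcomes (e.g.\ $X$-measuring both halves of $\ket{00}+\ket{11}$ gives perfectly correlated results quantumly but independent bits classically) are \emph{not} reproduced, and need not be: in the constructions those outcomes only ever control $Z$-type corrections, which are classically trivial. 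If you restrict your claim to the $Z$-basis values of the wires, the local same-wire replacements suffice and both of the obstacles above disappear.
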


\begin{proof}
  Out of the gates we use, $X$, CNOT and Toffoli are classical, and therefore
  their classical translations are trivial. The $Z$ gate flips the input's
  phase; as we're assuming the inputs and outputs to be classical, it
  corresponds classically to the identity gate. The same holds for CZ.

  There are two bare qubits we need to prepare: $\ket{0}$ and $\ket{+}$.  The
  classical translation of preparing a $\ket 0$ is just preparation of 0.  To
  determine the translation of preparing $\ket +$, note that immediate
  measurement in the $Z$ basis would yield a random bit.  We can thus translate
  state preparation of $\ket{+}$ to the classical scenario by preparing $0$ and
  then adding a random bit $r$ to it, which we assume to be generated in a
  leak-free manner, so that it's hidden from the adversary.
  
  The output doesn't correspond univocally to $\ket{+}$; if we had prepared
  $\ket{-}$, we'd get the same output. But we can show that from the point of
  view of a (classical) adversary that can see all the wires in the circuit,
  except for the leak-free component, this preparation procedure is equivalent
  to one that prepares $\ket{+}$, shown in Figure~\ref{fig:prepp}.

  \begin{figure}[h!]
    \centering
    \begin{displaymath}
      \Qcircuit @C=.5em @R=.3em @!R {
        & & \lstick{\ket{0}} & \targ & \push{I} \qw & \qw & \gate{Z} & \rstick{\ket{+}} \qw \\
        \push{\rule{.5em}{0em}} & & \lstick{\ket{+}} & \ctrl{0} \qwx & \gate{H} & \meter & \control \cw \cwx \gategroup{2}{1}{2}{7}{1.4em}{--}
      }
    \end{displaymath}
    \caption{Preparation of $\ket{+}$.}
    \label{fig:prepp}

  \end{figure}

  The second wire, shown inside the dashed box, is in the leak-free part of the
  circuit. The state at the point $I$ is $\kb{+} + \kb{-}$; this is the state
  used in the actual (classical) circuit. The figure shows how we could then
  correct this state using the leak-free component, so that in the end we get
  $\ket{+}$. Since the correction operation has no effect on the $Z$-basis
  value, it can be omitted. Therefore if we use the circuit in
  Figure~\ref{fig:prepp} for preparation of $\ket{+}$ in the circuit
  $\mathcal{C}$, the classical translation amounts to initializing a register to
  0 and then adding a (leak-free) random bit to it.

  Measurement in the $X$ basis can be done in a similar way.  Measuring $X$
  projects the state $\ket{\psi}$ onto one of the operator's eigenspaces; the
  projection operators are given by $P_i = \frac12(\mathbbm 1 + (-1)^i X)$ for
  $i \in \{0,1\}$ and the post-measurement state is given by $\rho =
  \sum_{i=0}^1 P_i \kb{\psi} P_i = \frac12\sum_{i=0}^1 X^i
  \kb{\psi} X^i$. Hence measuring in the $X$ basis is equivalent to
  a random bit flip, which we can simulate in the classical circuit by adding a
  random bit.
\end{proof}

The significance of this theorem is that we can use these
operations~(Section~\ref{sec:lrg}) to implement a reliable Toffoli gate, a
reversible circuit that is universal for classical computation, for
\textit{independent (phase) noise}. This noise consists of letting every wire in
the circuit be subject to a phase error\footnote{A phase error is an event where
  a phase flip can be introduced with probability $1/2$.} with a fixed
probability $p$. We'll see that this corresponds to the leakage model of
independent leakage, where each wire in the circuit can leak with probability
$p$.

\section{Leakage-resilient gadgets}
\label{sec:lrg}

Our fault-tolerant construction follows~\cite{aliferis2005quantum}, which works
in the model of independent phase noise where each wire in the circuit is
subject to a phase error with probability $p$. This phase noise model is related
to the independent leakage model, where each wire in the circuit leaks with
probability $p$. Leaking the state of one wire is equivalent to introducing one
phase error, as seen in Figure~\ref{fig:leak}.

  \begin{figure}[h!]
    \centering
    \begin{displaymath}
      \begin{aligned}
        \Qcircuit @C=1em @R=1em @!R {
          & \ctrl{1} & \qw \\
          \lstick{\ket{0}} & \targ & \qw }
      \end{aligned}
      \hspace{1em} = \hspace{2em}
      \begin{aligned}
        \Qcircuit @C=1em @R=1em @!R {
          & \qw & \ctrl{1} & \qw & \qw \\
          \lstick{\ket{0}} & \gate{H} & \gate{Z} & \gate{H} & \qw }
      \end{aligned} \hspace{1em} = \hspace{2em}
      \begin{aligned}
        \Qcircuit @C=1em @R=1em @!R {
          & \gate{Z} & \qw & \qw \\
          \lstick{\ket{+}} & \ctrl{-1} & \gate{H} & \qw }
      \end{aligned}
    \end{displaymath}
    
    \caption{Leaking one bit (encoded into the top wire) is equivalent to introducing a phase error.}
\label{fig:leak}
\end{figure}
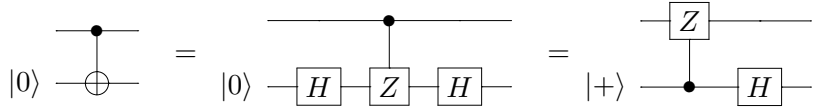

This relation also provides a more direct relation between fault tolerance and
leakage resilience than the one given in Theorem~\ref{thm:ft-lr}: No bits can
leak if all the errors are corrected, and therefore an
$\varepsilon$-reliable circuit is $\varepsilon$-leakage-resilient.

In order to construct a leakage-resilient compiler, we just need to implement a
set of fault-tolerant gates that is universal for classical computation. We only
use the components in the statement of Theorem~\ref{thm:c-q}, so that we are
able to make the implementation classical. A typical approach is to implement
the NAND gate. However, quantum gates are reversible, hence it's easier to
implement a universal set of reversible gates. We have chosen the Toffoli gate,
defined by $T(a,b,c) = (a,b,ab \oplus c)$, which is universal.

The construction we develop in the remainder of this section only uses the
components specified in Theorem~\ref{thm:c-q}. By this theorem, for each
component there exists a classical translation with the same wires as the
original quantum component, and therefore we can apply the relationship between
independent phase noise and independent leakage explained above directly (or, in
general, we could apply Lemma~\ref{lem:phasenoise}). We have the following
result.

\begin{corollary}
  \label{cor:uni}
  Let $L$ be the independent leakage model and assume the probability of leakage
  $p$ satisfies $p < 10^{-5}$. For any $\varepsilon > 0$ there exists a
  $\varepsilon$-leakage-resilient compiler $\pi$ that, given an arbitrary
  reversible classical circuit $\mathcal{C}$ with $l$ locations and depth $d$ as
  input, outputs a classical circuit $\overline{\mathcal{C}}$ with $l' = O(l
  \operatorname{polylog}(l/\varepsilon))$ and depth $d' = O(d
  \operatorname{polylog}(l/\varepsilon))$.
\end{corollary}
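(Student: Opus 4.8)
The plan is to assemble the corollary from three ingredients already available: (i) Theorem~\ref{thm:ft-lr} (or the more direct observation in Section~\ref{sec:lrg} that leaking one wire equals one phase error, so an $\varepsilon$-reliable circuit is $\varepsilon$-leakage-resilient), (ii) Theorem~\ref{thm:c-q}, which lets us trade the quantum gadgets for classical circuits with the same wires, and (iii) a concrete fault-tolerant threshold construction for independent phase noise. First I would recall the fault-tolerant scheme of~\cite{aliferis2005quantum}: it works in exactly the independent phase-noise model, it achieves a constant threshold, and by concatenating the code $O(\log\log(l/\varepsilon))$ times one obtains, from a reversible circuit with $l$ locations and depth $d$, a reliable simulation with $l\cdot\operatorname{polylog}(l/\varepsilon)$ locations and $d\cdot\operatorname{polylog}(l/\varepsilon)$ depth that is $\varepsilon$-reliable, provided the physical error rate $p$ is below the threshold. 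I would cite $p<10^{-5}$ as (a safe value below) the threshold established there.

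Next I would check that this scheme only uses the gate set permitted by Theorem~\ref{thm:c-q}. The point is that we do not need a universal \emph{quantum} gate set: to implement a reliable Toffoli gate we need encoded versions of $X$, $Z$, CNOT, CZ, Toffoli, preparation of $\ket 0$ and $\ket +$, and $X$- and $Z$-basis measurement — precisely the components listed in Theorem~\ref{thm:c-q}. (Preparation of encoded $\ket +$ and the Toffoli gadget, detailed elsewhere in this section, are built from these primitives, e.g.\ via transversal CZ/CNOT, state injection, and the usual Steane-style error correction using ancilla $\ket 0$ and $\ket +$ blocks; the classically-problematic ingredient, preparation of $\ket +$, is handled exactly by the leak-free random-bit trick in the proof of Theorem~\ref{thm:c-q}.) Applying Theorem~\ref{thm:c-q} componentwise then yields a \emph{classical} circuit $\overline{\mathcal C}$ with the \emph{same wires} as the quantum simulation, and producing the same outputs; the size and depth overheads are therefore unchanged.

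Finally I would invoke the leakage-to-noise correspondence to conclude security. Since the classical translation preserves every wire, the independent leakage model $L$ on $\overline{\mathcal C}$ corresponds, via Lemma~\ref{lem:phasenoise} (or directly via Figure~\ref{fig:leak}), to the independent phase-noise model against which the quantum simulation was designed to be $\varepsilon$-reliable. By the argument of Theorem~\ref{thm:ft-lr} — or, more simply, by the remark at the start of Section~\ref{sec:lrg} that if all phase errors are corrected then no bits leak — the resulting compiler is $\varepsilon$-leakage-resilient against $L$; the only leak-free component invoked is the source of random bits from Theorem~\ref{thm:c-q}. Combining $\pi_A$ (output $(\enc y,\overline{\mathcal C})$) with the canonical $\pi_E$ of Remark~\ref{rmk:pie} gives the claimed protocol $\pi$, with the stated parameters $l'=O(l\operatorname{polylog}(l/\varepsilon))$ and $d'=O(d\operatorname{polylog}(l/\varepsilon))$.

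The main obstacle is bookkeeping rather than a genuinely new idea: one must verify carefully that the fault-tolerant Toffoli construction of this section (and the underlying scheme of~\cite{aliferis2005quantum}) can indeed be carried out using \emph{only} the restricted gate set of Theorem~\ref{thm:c-q}, including all error-correction and ancilla-verification subroutines, and that the threshold and polylog overhead bounds quoted survive this restriction — in particular that replacing a generic universal quantum gadget set by this classical-friendly one does not blow up the threshold below $10^{-5}$ or change the asymptotic location/depth counts.
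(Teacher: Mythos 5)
Your proposal is correct and follows essentially the same route as the paper, which derives the corollary by combining the accuracy threshold theorem of~\cite{aliferis2005quantum}, the fault-tolerant Toffoli implementation of Section~\ref{sec:lrg}, Theorem~\ref{thm:c-q}, and the correspondence between independent phase noise and independent leakage. The ``bookkeeping'' obstacle you flag --- verifying that every gadget, including error correction and ancilla verification, uses only the restricted gate set of Theorem~\ref{thm:c-q} --- is exactly what the remainder of Section~\ref{sec:lrg} is devoted to establishing.
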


This corollary follows from the implementation of the Toffoli gate and the
accuracy threshold theorem of~\cite{aliferis2005quantum}, along with
Theorem~\ref{thm:c-q} and the relationship between independent phase noise and
independent leakage.

Given the construction for the Toffoli gate, we have a method of compiling an
arbitrary circuit into a private one that works as follows. First, we convert
the circuit into one using only Toffoli gates. As we'll see, our fault-tolerant
implementation of the Toffoli gate involves only $X,Z$ and CNOT gates, state
preparation, measurement in the $X$ and $Z$ bases, and error correction. Hence
it has a classical translation due to Theorem~\ref{thm:c-q}. Furthermore, by
Theorem~\ref{thm:ft-lr}, the resulting circuit is leakage-resilient against
independent leakage.

The construction in~\cite{aliferis2005quantum} works by encoding qubits in the
Steane $[[7,1,3]]$ code~\cite{steane1996multiple}. Fault tolerance is achieved
by constructing encoded gadgets that are resilient to errors ``in first order''
---that is, if the physical circuit has probability of failure $\varepsilon$,
then the encoded circuit has probability of failure $O(\varepsilon^2)$. One can
then achieve an arbitrary degree of accuracy by code concatenation.

In what follows, we present the fault-tolerant gadgets we use in the
construction of the Toffoli gate.

\vspace{.5em}

\noindent \textbf{Measurement.}  A $Z$ basis measurement can be done by
measuring transversally. For the Steane code, $\enc{0}$ is an equal
superposition of all the even-weight codewords of the Hamming code, and
$\enc{1}$ is an equal superposition of the odd-weight codewords
(Section~\ref{sec:steane-code} has a review of the Steane code). Thus we can
determine which state was prepared by computing the parity of the measurement
outcomes. Since phase errors don't affect the outcomes, the procedure is
fault-tolerant.

$X$ basis measurement can thus be done by applying $\overline{X} = X_1 X_2 X_3$
controlled by $\ket{+}$ and then measuring in the $Z$ basis. Phase errors only
propagate from the data to the ancilla $\ket{+}$ state, which, as argued in the
proof of theorem~\ref{thm:c-q}, can be assumed to be prepared perfectly, since
classically they correspond to generating a random bit. This ensures that the
procedure is fault-tolerant.

\vspace{.5em}

\noindent \textbf{Error correction.} 
Error correction consists
  of syndrome extraction, in which the errors are diagnosed, and a recovery
  step, performed in order to transform the state back into the correct
  one. First, notice that since we're performing computation on $Z$ basis states
  and we assume that only phase noise is possible, the recovery step would always 
  consist of applying $Z$ operators and by the remarks in Theorem~\ref{thm:c-q} can never
  change the outcome of the computation. Therefore only the syndrome extraction
  step is necessary. For that we use a method known as Steane error
  correction~\cite{steane1997active,gottesman2009introduction}, for which syndrome extraction reduces to
  $X$ basis measurement.

\vspace{.5em}

\noindent \textbf{CNOT and $X$ gates.}
For the Steane code, the $X$ gate can be applied transversally, that is,
$\overline{X} = \otimes_{i=1}^7 X_i$. The CNOT gate can also be easily seen to
be transversal, in the sense that every qubit in the first block only interacts
with the corresponding qubit in the second block.

\vspace{.5em}

\noindent \textbf{State preparation.}  We first describe the state preparation
of $\enc{0}$. It is accomplished by preparing 7 $\ket{0}$ states and then
performing Steane (phase) error correction. This method consists of taking the
state $\ket{0}^{\otimes 7}$ and then performing a CNOT gate with this state as
target and an ancilla state $\enc{0}$ as control. The ancilla is then measured
in the $Z$ basis. This, of course, has an obvious problem---in order to execute
the circuit, we need $\enc{0}$, which is exactly the state we are trying to
prepare. But before addressing this issue, let us see why it works. The outcomes
of the measurement of $\enc{0}$ after the transversal CNOT with
$\ket{0}^{\otimes 7}$ determine the eigenvalues of the $X$-type stabilizers of
the code. The original state $\ket 0^{\otimes 7}$ is projected onto the subspace
associated with the particular measurement outcomes. We can then map the state
into the codespace by changing the eigenvalues as needed. But in fact recovery
is not necessary, since it could only change the phase of the state. As long as
we have the information about the eigenvalues, we might as well adopt the
resulting state as $\enc{0}$.

In order to prepare the ancilla $\enc{0}$, we use the circuit shown
in~\cite[Fig.~12]{aliferis2005quantum}. We note that, since the circuit
acts on physical qubits, it's not fault-tolerant. The usual way to prepare an
ancilla state fault-tolerantly is to perform a verification step after encoding,
where the state is rejected and the procedure repeated if the verification
detects too many errors. However, this doesn't help us because there's no clear
classical analogue; the circuit wouldn't ``know'' when to reject a state, since
phase errors don't show up in the classical picture.

Instead, we use the ancilla verification method developed
in~\cite{divincenzo2007effective}. In this method, the ancilla state is
prepared, interacts with the data transversally and is subsequently decoded. The
decoding is done in such a way that errors due to a single fault in the ancilla
preparation can be perfectly distinguished and the data block can be
corrected. In our case, we can decode by measuring the phase-error syndrome,
which for the Steane code reduces to $X$ basis measurement.

Now we can use $\enc{0}$ along with $\ket{+}$ to prepare $\enc{+}$: we prepare
$\enc{0}$, and then apply $\overline{X}$ controlled by a random bit. As we've
argued for $X$ basis measurement, the circuit is fault-tolerant.

The method used to prepare $\enc{0}$ can also be used to prepare the ``Shor
state'' $\shor$, which is a superposition of all the even-weight words in
$\{0,1\}^7$ and is used in the construction of the Toffoli ancilla state
(below). Our method to prepare and verify $\shor$ is the same as the one
presented in~\cite{divincenzo2007effective} for the cat state, except it's done
in the rotated basis ($\shor$ is obtained from the cat state by applying the
Hadamard gate transversally). For more details, see
Appendix~\ref{sec:shor-state}.

\vspace{.5em}

\noindent \textbf{Toffoli gate.}
The gadget for the Toffoli gate is shown in Figure~\ref{fig:toffoli-q}. The
correctness of the circuit can be verified by inspection. However we've not
shown how to execute the subcircuit in the dashed rectangle; indeed, this
subcircuit uses a Toffoli gate, which is exactly what we're trying to
implement. Instead, we use an alternative circuit to prepare the state
$\enc{\Theta} = \enc{000} + \enc{100} + \enc{010} + \enc{111}$, which is the
output of the subcircuit in the dashed rectangle. The alternative circuit is
shown in Figure~\ref{fig:toffoli-a}.

\begin{figure}[h!]
  \begin{subfigure}[b]{0.6\textwidth}
    \begin{displaymath}
      \Qcircuit @C=.5em @R=.5em @!R {
        \push{\rule{1em}{0em}} & & \lstick{\enc{+}} &\control \qw \qwx[1] & \ctrl{3} & \qw & \qw & \qw &
        \ctrl{1} & \qw & \ctrl{2} & \gate{X} & \qw & \rstick{\overline{x}} \qw \\
        & & \lstick{\enc{+}} & \ctrl{1}          & \qw & \ctrl{3} & \qw & \qw &
        \gate{Z} & \gate{X} & \qw & \qw & \ctrl{1} & \rstick{\overline{y}} \qw \\
        & & \lstick{\enc{0}} & \targ             & \qw & \qw & \targ & \gate{Z}
        & \qw & \qw & \targ & \qw & \targ & \rstick{\overline{z+xy}} \qw \\
        & & \lstick{\enc{x}} & \qw               & \targ & \qw & \qw & \qw &
        \qw & \qw & \qw & \measure{Z} \cwx[-3] & \control \cw \cwx & \cw \\
        & & \lstick{\enc{y}} & \qw               & \qw & \targ & \qw & \qw &
        \qw & \measure{Z} \cwx[-3] & \control \cw \cwx[-2] & \cw & \cw & \cw \\
        & & \lstick{\enc{z}} & \qw               & \qw & \qw & \ctrl{-3} &
        \measure{X} \cwx[-3] & \control \cw \cwx[-4] \gategroup{1}{1}{3}{4}{1em}{--}
        & \cw & \cw & \cw & \cw & \cw
      }
    \end{displaymath}
    \caption{}
    \label{fig:toffoli-q}
  \end{subfigure}
  \begin{subfigure}[b]{0.4\textwidth}
    \begin{displaymath}
      \Qcircuit @C=.5em @R=.5em @!R {
        \lstick{\enc{+}} & \qw & \control \qw \qwx[1] & \qw & \qw & \qw \\
        \lstick{\enc{+}} & \qw & \ctrl{2} & \qw & \qw & \qw \\
        \lstick{\enc{+}} & \ctrl{1} & \qw & \gate{X} & \qw & \qw \\
        \lstick{\ket{\text{even}_7}} & \targ & \targ & \measure{\mbox{parity}} \cwx
      }
    \end{displaymath}
    \caption{}
    \label{fig:toffoli-a}
  \end{subfigure}
  \caption{(\subref{fig:toffoli-q}) Reliable Toffoli
    gate. (\subref{fig:toffoli-a}) Construction of the Toffoli ancilla
    state. All the gates are executed transversally. After computing the parity
    the circuit~\ref{fig:shor-dec} is executed on the bottom block.}
\end{figure}
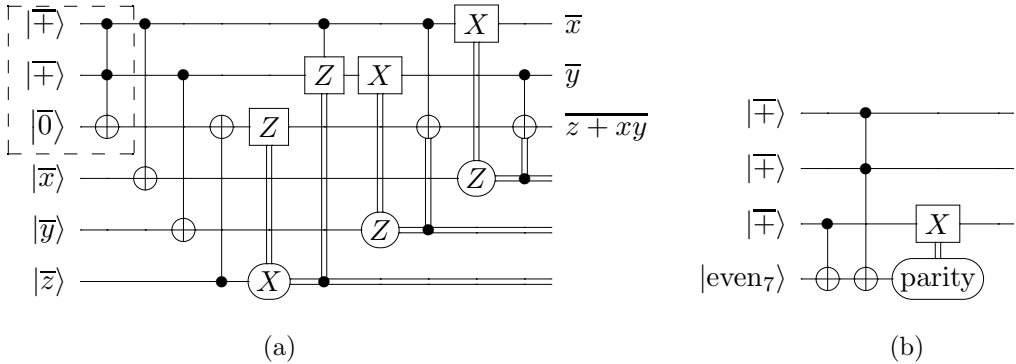

All the gates in Figure~\ref{fig:toffoli-a} are executed transversally, so the
circuit is fault-tolerant.\footnote{Because the Toffoli gate is not a Clifford
  gate, it doesn't propagate Pauli errors to Pauli errors; in particular, a
  phase flip in the third block is propagated to a superposition of phase flips.
  This would make a classical simulation of the level-$k$ circuit inefficient,
  because the error patterns to keep track of grow exponentially on $k$, but
  again, this is not a problem for our construction because we don't need to
  perform error recovery.}  We'll show that if transversal CNOT and Toffoli work
as the ``encoded'' circuits in this case, where $\ket{\text{even}_7}$ encodes 0
and the ``odd'' state $\ket{\text{odd}_7}$ encodes 1, then the circuit prepares
$\enc{\Theta}$. Specifically, we need to check that output in the target of the
CNOT is $\ket{\text{even}_7}$ if the control is $\enc{0}$ and
$\ket{\text{odd}_7}$ if the control is $\enc{1}$, and that the Toffoli gate
outputs $\ket{\text{odd}_7}$ if and only if both control qubits are $\enc{1}$,
and $\ket{\text{even}_7}$ otherwise. For the CNOT, $\enc{0}$ is an equal
superposition of even-weight strings, so XORing them with even-weight strings
outputs even-weight strings. $\enc{1}$ is an equal superposition of odd-weight
strings, and XORing them with even-weight strings produces odd-weight strings.

For the Toffoli gate, it suffices to prove that the transversal product of
$\enc{x}$ and $\enc{y}$ is a superposition of odd-weight strings if and only if
$x = y = 1$, and is a superposition of even-weight strings otherwise. The key to
show this claim is that the Steane code is based on a classical code
$\mathcal{C}$ that contains its own dual $\mathcal{C}^{\perp}$. Let $H$ be the
parity-check matrix of the code $C$. We have $H \mathbf{c} = 0$ for every
$\mathbf{c} \in \mathcal{C}$. That is, each codeword is orthogonal to every row
$H_i$ of $H$. Since $\mathcal{C}^{\perp} \subset \mathcal{C}$, this also applies
to the codewords of $\mathcal{C}^{\perp}$, and in particular to $H_i$. That is,
each $H_i$ is orthogonal to itself, which can only happen if it has even
weight. Hence all the codewords in $\mathcal{C}^{\perp}$ have even
weight. Furthermore, they must have even overlap (i.e., have 1 in the same
position) with every $H_i$ (otherwise they wouldn't be orthogonal), and
therefore also among themselves.

Thus $\enc{0}$ is the superposition of all the even-weight codewords of
$\mathcal{C}$, which have even overlap among themselves by the argument
above. The state $\enc{1}$ is the superposition of the codewords in $\mathcal{C}
- \mathcal{C}^{\perp}$, which all have odd weight. Since they also satisfy the
parity check $H$ they have even overlap with the rows of $H$, and therefore also
with the codewords in $\enc{0}$. But they have odd overlap among themselves. To
prove this, let $x,y \in \mathcal{C} - \mathcal{C}^{\perp}$. If $x$ and $y$ have
even overlap, then $x+y \in \mathcal{C}^{\perp}$ (since it has even weight) and
thus $x$ and $x+y$ would have even overlap. This proves our claim.

Thus, we have a way of preparing the ancilla state $\enc{\Theta}$
fault-tolerantly. Furthermore, the classical translation is easy: the only new
component here is measurement of the parity of the Shor state, which can be done
by measuring in the $Z$ basis and then adding up the outcomes. Now every
component in Figure~\ref{fig:toffoli-q} has a classical translation.

\section{Discussion}
\label{sec:thr}

Theorem~\ref{thm:ft-lr} establishes an explicit relationship between
fault-tolerant quantum computation and classical leakage resilience. Although it
relates leakage and noise models (via Lemma~\ref{lem:phasenoise}), it is not
clear how the properties of the noise model relate to the properties of the
leakage model in general. However, in some cases the leakage model resulting
from a given noise model has a simple interpretation. In this work, we analyzed
the independent leakage model, which corresponds to the independent phase noise
model. A possible further direction is to take a leakage model that is used in
other leakage resilience proposals and try to understand the corresponding noise
model. Conversely, one could take, say, the ``local noise'' model
of~\cite{aliferis2005quantum} and see what the corresponding leakage model looks
like.

We developed a concrete implementation of universal leakage-resilient
computation based on the fault-tolerant construction
of~\cite{aliferis2005quantum}. This construction works in the independent noise
model by using the concatenated Steane code. Fault tolerance is achieved
assuming probability of error per wire or gate $p < 10^{-5}$ (the accuracy
threshold). Along with our results, this gives us a leakage-resilient
construction for universal classical computation assuming independent leakage
with probability of leakage per wire $p < 10^{-5}$.

Taking into account the fact that we only want leakage resilience rather than
fault tolerance, this threshold can be improved. An accuracy threshold for
concatenated codes depends essentially on the size of the largest gadget in the
first level of encoding (a level-1 extended rectangle or 1-exRec in the language
of~\cite{aliferis2005quantum}). A lower bound for the threshold is the
reciprocal of the number of pairs of locations in the largest 1-exRec. In the
implementation of~\cite{aliferis2005quantum}, the largest 1-exRec is the CNOT
gate, but due to the simplifications in our case, the largest extended
rectangles are the gadgets for state preparation: they have 20 locations
each. The number of pairs of locations is then $\binom{20}{2} = 190$, which
gives a crude threshold estimate of $p \approx 0.5\%$.

As in~\cite{aliferis2005quantum}, we only use the Steane code, but we note that
our construction works for any CSS code based on a dual-containing classical
code. A promising possibility is to use color
codes~\cite{bombin2006topological}, for which numerical evidence suggests they
have good accuracy thresholds~\cite{landahl2011fault} but currently lack a
rigorous lower bound on the threshold.

We note that while the independent leakage model looks similar to the ``noisy
leakage'' case of~\cite{faust2010protecting}, they're in fact different
models. In the noisy leakage model, all bits leak, but the adversary only
receives noisy versions of these bits, each of them being flipped with
probability $p$. Crucially, the adversary doesn't know which bits have leaked
faithfully, whereas in our model every bit that's leaked is sure to arrive
correctly at the adversary.

While we don't expect existing results on fault tolerance to give direct
constructions for the leakage models commonly studied in the literature
(especially since fault tolerance has only been shown to work in very few noise
models), we note that quantum fault tolerance is a stronger requirement than
classical leakage resilience: as we've seen in this work, translating from the
former to the latter allows us to make several simplifications. Additionally,
the techniques used in the area of fault tolerance are different from those used
for leakage resilience.  Hence, we expect our result to shed new light on
leakage resilience.

\vspace{0.5em}

\noindent \textbf{Acknowledgments.} The authors would like to thank Christopher
Portmann for helpful comments on the abstract cryptography framework. This work
was supported by the Swiss National Science Foundation (through the National
Centre of Competence in Research `Quantum Science and Technology' and grant
No. 200020-135048), by the European Research Council (grant No. 258932) and by
CNPq/Brazil.

\bibliography{ft}{}
\bibliographystyle{amsalpha}

\appendix

\section{Qubits and Stabilizers}
\label{sec:app}

A quantum system can be described by a complex Hilbert space. In this work, we
deal only with two-level systems. Let $\mathcal{H}_2$ be a two-dimensional
complex Hilbert space and let $\{\ket{0},\ket{1}\}$ denote one orthonormal
basis, which we call the standard basis or the computational basis. An
arbitrary element $\ket{\psi}$ of $\mathcal{H}_2$ can be written $\ket{\psi} =
\alpha \ket{0} + \beta \ket{1}$ were $\alpha,\beta \in \mathbf{C}$. Quantum
states are represented by unit vectors, that is, $|\alpha|^2 +
|\beta|^2 = 1$.

Composite systems are given by the tensor product, that is, if $A,B$ are two
complex Hilbert spaces, then the state of the system composed by $A$ and $B$ is
an element of $A \otimes B$.

Systems whose state is not completely known are described properly by
\textit{density operators}. Suppose a system is in the state $\ket{\psi_i}$ with
probability $p_i$. We define the density operator $\rho$ for the system by
\begin{displaymath}
  \rho = \sum_i p_i \kb{\psi_i}
\end{displaymath}

States that are completely known, that is, states of the form $\rho =
\kb{\psi}$, are called \textit{pure states}, while other states are called
\textit{mixed states}.

In the following, as well as in the rest of the text, we only deal with
states that are equal superpositions, so we omit the normalizing factor. So,
for instance, we write the state $\ket{-} = \frac{1}{\sqrt{2}} (\ket{0} -
\ket{1})$ simply as $\ket{0} - \ket{1}$.

\subsection{Pauli operators}
\label{sec:pauli-operators}

Take the standard basis for $\mathcal{H}_2$ and let
\begin{displaymath}
  \mathbbm{1} =
  \begin{pmatrix}
    1 & 0 \\
    0 & 1
  \end{pmatrix}
  \qquad
  X =
  \begin{pmatrix}
    0 & 1 \\
    1 & 0
  \end{pmatrix}
  \qquad
  Y =
  \begin{pmatrix}
    0 & -i \\
    i & 0
  \end{pmatrix}
  \qquad
  Z =
  \begin{pmatrix}
    1 & 0 \\
    0 & -1
  \end{pmatrix}
\end{displaymath}

$X,Y,Z$ are called the \textit{Pauli matrices}. They anticommute with each
other. Furthermore, an arbitrary matrix $A$ on $\mathcal{H}_2$ can be written $A
= a_0 \mathbbm{1} + a_1 X + a_2 Y + a_3 Z$ with $a_0,a_1,a_2,a_3 \in
\mathbf{C}$. Similarly, operators on $\mathcal{H}_2^{\otimes n}$ can also be
written in terms of Pauli matrices. Let $\mathcal{P}_n$ be the set of operators
of the form $i^k \bigotimes_{i=1}^n P_i$ where $k \in \mathbf{Z}$ and $P_i \in
\{\mathbbm{1},X,Y,Z\}$. The set $\mathcal{P}_n$ is a nonabelian group; it's
called the \textit{Pauli group} on $n$ qubits. Because $Y = i X Z$, the group is
generated by $X$ and $Z$ (up to a phase factor); that is, $\mathcal{P}_n =
\langle i\mathbbm{1}, X_1, \dots, X_n, Z_1, \dots, Z_n \rangle$, where $X_i$
denotes the operator that acts as $X$ on the $i$th qubit, and similarly for
$Z_i$. An arbitrary matrix on $\mathcal{H}_2^{\otimes n}$ can be written as a
linear combination of elements of $\mathcal{P}_n$.

Consider now a system $S$ subject to noise from the environment $E$. By the
above discussion, the system evolves as
\begin{displaymath}
  \ket{\psi}_S \ket{0}_E \to \sum_k E_k \ket{\psi}_A \ket{e_k}_E
\end{displaymath}
where the states $\ket{e_k}$ are not necessarily orthogonal, and $\{E_k\}$ is a
set of linearly independent Pauli operators. We call each of them a
\textit{Pauli error}. We say a Pauli error $E_k$ has weight $t$ if it acts
nontrivially on at most $t$ qubits. Because general errors can always be
decomposed into Pauli errors, we only need to design error correcting codes that
can correct Pauli errors.

\subsection{Error-correcting codes}
\label{sec:error-corr-codes}

A quantum error-correcting code $\mathcal{C}$ of length $n$ is a subspace of
$\mathcal{H}_2^{\otimes n}$. Let $\mathcal{E}$ be a set of errors. We say that
$\mathcal{C}$ corrects $\mathcal{E}$ if there exists a recovery operator $R$
acting on a larger system $\mathcal{C} \otimes A$ such that for every
$\ket{\psi} \in \mathcal{C}$ and every $E \in \mathcal{E}$, we have $\tr_A R(E
\ket{\psi}_{\mathcal{C}} \ket{a}_A) = \ket{\psi}_{\mathcal{C}}$, where $\ket{a}$
is some ancilla state in $A$.

A code that corrects all errors of weight $t$ can \textit{detect} all errors of
weight $2t$. We define the \textit{distance} of a code as the weight of the
error of smallest weight that isn't detectable. Thus, a code that can correct
$t$ errors has distance $d = 2 t + 1$. Analogously to the classical case, we
call an error correcting code of weight $n$ that encodes $k$ qubits and has
distance $d$ an $[[n,k,d]]$ code.

Now let $\mathcal{S}$ be a subgroup of the Pauli group $\mathcal{P}_n$ not
containing $-\mathbbm{1}$. Given such a subgroup, we define a \textit{stabilizer
  code} $\mathcal{C}$ of length $n$ as
\begin{displaymath}
  \mathcal{C} = \{\ket{\psi} \in \mathcal{H}_2^{\otimes n}\,\vert\,s \ket{\psi} =
  \ket{\psi}, \forall s \in \mathcal{S}\}.
\end{displaymath}

The group $\mathcal{S}$ is called the code's stabilizer. If the code encodes $k$
qubits, the stabilizer is generated by $n-k$ elements. Errors that are not
detectable are in the centralizer $\mathcal{Z}(\mathcal{S})$, the group of
operators that commute with all elements of $\mathcal{S}$. Since errors act
nontrivially on the codewords, they're not in $\mathcal{S}$; thus, the distance
of the code is given by the weight of the operator in $\mathcal{Z}(\mathcal{S})
- \mathcal{S}$ with smallest weight.

Since $\mathcal{P}_n$ has dimension $2n$, the centralizer has dimension $2n -
(n-k) = n + k$. Therefore $\mathcal{Z}(\mathcal{S}) - \mathcal{S}$ has dimension
$2k$. These operators can be regarded as \textit{logical operations} on the
codewords. We can always choose them to be the operators
$\overline{Z_1},\dots,\overline{Z_k},\overline{X_1},\dots,\overline{X_k}$,
satisfying the anticommutation relation $\overline{Z_i}\, \overline{X_j} =
(-1)^{\delta_{ij}} \overline{X_j}\,\overline{Z_i}$.

Stabilizer generators are the quantum analogue of rows in the parity-check
matrix of a classical code. In fact, there's a general class of codes known as
\textit{CSS codes} that are constructed from two classical codes $C_1,C_2$ such
that $C_2^\perp \subset C_1$. We won't introduce the general theory of CSS codes
here, and instead concern ourselves with a particular CSS code known as the
Steane code.

\subsection{The Steane code}
\label{sec:steane-code}

The Steane code~\cite{steane1996multiple} is a $[[7,1,3]]$ code that has
stabilizer generators
\begin{align*}
  g_1 &= Z_1 Z_3 Z_5 Z_7 \\
  g_2 &= Z_2 Z_3 Z_6 Z_7 \\
  g_3 &= Z_4 Z_5 Z_6 Z_7 \\
  g_4 &= X_1 X_3 X_5 X_7 \\
  g_5 &= X_2 X_3 X_6 X_7 \\
  g_6 &= X_4 X_5 X_6 X_7
\end{align*}
and logical operators
\begin{align*}
  \overline{Z} &= Z_1 Z_2 Z_3 \\
  \overline{X} &= X_1 X_2 X_3
\end{align*}

The Steane code is a CSS code based on the classical $[7,4,3]$ Hamming code,
which we denote here by $C$. The Hamming code contains its own dual, that is,
$C^\perp \subset C$. The stabilizer generators $g_1,g_2,g_3$ and $g_4,g_5,g_6$
for the Steane code correspond to the rows in the parity check matrix for $C$,
where a 0 in the parity check matrix corresponds to the identity and 1
corresponds to $Z$ (for the $Z$-type stabilizers) or $X$ (for the $X$-type
stabilizers). The codewords are given by
\begin{align*}
\enc{0} = \textstyle{\sum_{x \in C^\perp}} \ket{x} &= \ket{0000000} + \ket{0001111} +
\ket{0110011} + \ket{1010101} \\
 & \qquad {} + \ket{0111100} + \ket{1011010} + \ket{1100110} \\
\enc{1} = \textstyle{\sum_{x \in C - C^\perp}} \ket{x} &= \ket{1111111} + \ket{1110000} +
\ket{1001100} + \ket{0101010} \\
& \qquad {} + \ket{1000011} + \ket{0100101} + \ket{0011001}.
\end{align*}

Notice that $\enc{0}$ is an equal superposition of all the even-weight Hamming
codewords, and $\enc{1}$ is an equal superposition of all the odd-weight
ones. This comes from the fact that $C^\perp \subset C$, which implies several
properties that make the Steane code useful for fault tolerance. In particular,
the logical Hadamard gate can be implemented transversally, that is,
$\overline{H} = H^{\otimes 7}$. This fact is also crucial in the construction we
use for the Toffoli ancilla state, used to implement the Toffoli gate.

\subsection{Quantum gates and measurement}
\label{sec:quantum-gates}

$X$ and $Z$ gates correspond to the application of the respective Pauli
matrices. CZ and CNOT are the controlled version of these gates, that is,
they're linear operators with
\begin{align*}
  & \text{CNOT}(\ket{a},\ket{b}) = (\ket{a},\ket{a+b}) \\
  & \text{CZ}(\ket{a},\alpha \ket{0} + \beta \ket{1}) = (\ket{a},\alpha \ket{0}
  + \beta (-1)^a \ket{1})
\end{align*}

Measurements are operations that take a quantum state as input and have a
classical outcome. A measurement in the $Z$ basis of an arbitrary qubit $\alpha
\ket{0} + \beta \ket{1}$ returns $0$ with probability $|\alpha^2|$ and $1$
with probability $|\beta^2|$. A measurement in the $X$ basis works similarly,
by writing the state in the $X$ basis: measuring $\alpha \ket{+} + \beta
\ket{-}$ gives $0$ with probability $|\alpha^2|$ and $1$ with probability
$|\beta^2|$.

\section{Construction of the Shor state}
\label{sec:shor-state}

The circuit to construct the Shor state is shown in Figure~\ref{fig:shor}. After
it interacts with the data, we measure its syndrome using the circuit in
Figure~\ref{fig:shor-dec} in order to diagnose a possible fault in the
preparation.

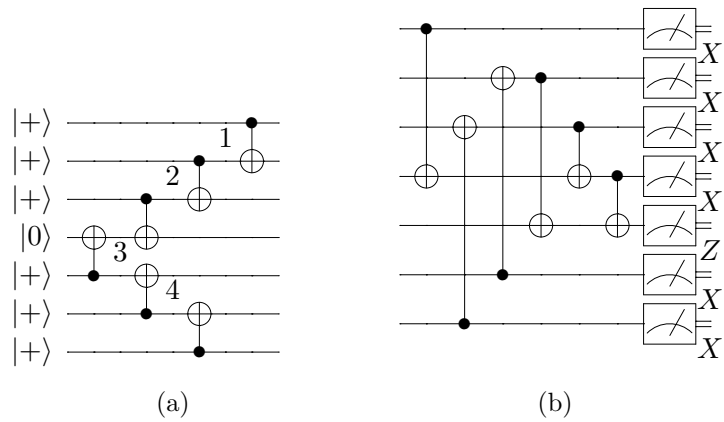
\begin{figure}[h!]
  \centering
  \begin{subfigure}[b]{0.3\textwidth}
    \begin{displaymath}
      \Qcircuit @C=.5em @R=.5em @!R {
        \lstick{\ket{+}} & \qw & \qw & \qw & \qw & \qw & \qw & \ctrl{1} & \qw \\
        \lstick{\ket{+}} & \qw & \qw & \qw & \qw & \ctrl{1} & \ustick{1} \qw & \targ & \qw \\
        \lstick{\ket{+}} & \qw & \qw & \ctrl{1} & \ustick{2} \qw & \targ & \qw & \qw & \qw \\
        \lstick{\ket{0}} & \targ & \qw & \targ & \qw & \qw & \qw & \qw & \qw \\
        \lstick{\ket{+}} & \ctrl{-1} & \ustick{3} \qw & \targ & \qw & \qw & \qw & \qw & \qw \\
        \lstick{\ket{+}} & \qw & \qw & \ctrl{-1} & \ustick{4} \qw & \targ & \qw & \qw & \qw \\
        \lstick{\ket{+}} & \qw & \qw & \qw & \qw & \ctrl{-1} & \qw & \qw & \qw
      }
    \end{displaymath}
    \caption{}
    \label{fig:shor}
  \end{subfigure}
  \begin{subfigure}[b]{0.3\textwidth}
    \begin{displaymath}
      \Qcircuit @C=.5em @R=.3em @!R {
        & \ctrl{3} & \qw & \qw & \qw & \qw & \qw & \meter & \dstick{X} \cw \\
        & \qw & \qw & \targ & \ctrl{3} & \qw & \qw & \meter & \dstick{X} \cw \\
        & \qw & \targ & \qw & \qw & \ctrl{1} & \qw & \meter & \dstick{X} \cw \\
        & \targ & \qw & \qw & \qw & \targ & \ctrl{1} & \meter & \dstick{X} \cw \\
        & \qw  & \qw & \qw & \targ & \qw & \targ & \meter & \dstick{Z} \cw \\
        & \qw & \qw & \ctrl{-4} & \qw & \qw & \qw & \meter & \dstick{X} \cw \\
        & \qw & \ctrl{-4} & \qw & \qw & \qw & \qw & \meter & \dstick{X} \cw
      }
    \end{displaymath}
    \caption{}
    \label{fig:shor-dec}
  \end{subfigure}
  \caption{(\subref{fig:shor}) Preparation of the Shor state $\shor$. The
    numbers indicate positions where errors introduced there cause multiple
    errors in the output. (\subref{fig:shor-dec}) Syndrome measurement for the
    Shor state.}
\end{figure}

We need to show that there's a decoding procedure such that any possible errors
introduced in the encoding of the Shor state leads to different error
syndromes, which could then be used to correct any errors it might have caused
in the circuit. First let's look at all the possible multiple $Z$ error patterns
resulting from a single fault that might occur during encoding. Those are shown
by the numbers in Figure~\ref{fig:shor}. All the other possibilities either lead
to a single error in the output or to the same error pattern as one of those.

The error patterns, in the order given by the figure, are $Z_1 Z_2$, $Z_1 Z_2
Z_3$, $Z_5 Z_6 Z_7$ and $Z_6 Z_7$. The decoding circuit
(Figure~\ref{fig:shor-dec}) propagates these errors to $Z_1 Z_2 Z_6$, $Z_1 Z_2
Z_3 Z_6 Z_7$, $Z_2 Z_4 Z_6 Z_7$ and $Z_6 Z_7$ respectively. Thus all possible
error patterns give different syndromes, making them correctable.

\end{document}